\numberwithin{equation}{section} 
\newtheorem{theorem}{Theorem}[section]
\newtheorem{proposition}[theorem]{Proposition}
\newtheorem{remark}[theorem]{Remark}
\newcommand\1{{\mathds 1}}
\newcommand{\eps}{\varepsilon}
\newcommand{\RR}{\mathbb{R}}
\newcommand{\bbI}{\mathbb{I}}
\newcommand{\ri}{\mathrm{i}}
\newcommand{\cS}{\mathscr{S}}
\newcommand{\cD}{\mathcal{D}}
\newcommand{\cE}{\mathcal{E}}
\newcommand{\cC}{\mathcal{C}}
\newcommand{\C}{\mathbb{C}}
\newcommand{\R}{\mathbb{R}}
\newcommand{\N}{\mathbb{N}}
\newcommand{\Z}{\mathbb{Z}}
\newcommand\bA{{\bold A}}
\newcommand\bB{{\bold B}}
\newcommand\be{{\bold e}}
\newcommand\bL{{\bold L}}
\newcommand\bm{{\bold m}}
\newcommand\bp{{\bold p}}
\newcommand{\bP}{\mathbf{P}}
\newcommand\bR{{\bold R}}
\newcommand\bx{{\mathbf{x}}}
\newcommand\by{{\mathbf{y}}}
\def\cB{{\mathcal B}}
\def\cC{{\mathcal C}}
\def\cD{{\mathcal D}}
\def\cE{{\mathcal E}}
\def\cF{{\mathcal F}}
\def\cG{{\mathcal G}}
\def\cH{{\mathcal H}}
\def\cI{{\mathcal I}}
\def\cP{{\mathcal P}}
\def\cS{{\mathcal S}}
\def\cW{{\mathcal W}}
\def\fm{{\mathfrak{m}}}
\def\fh{{\mathfrak{h}}}
\def\ft{{\mathfrak t}}
\def\rd{{\mathrm{d}}}
\def\re{{\mathrm{e}}}
\def\ri{{\mathrm{i}}}
\def\rspan{{\mathrm{span}}}
\newcommand{\rHF}{{\rm rHF}}
\newcommand{\Tr}{{\rm Tr}}
\newcommand{\VTr}{\underline{\rm Tr}}
\newcommand{\curl}{{\bf curl} \,}
\newcommand{\Ran}{{\rm Ran}}
\newcommand{\Ker}{{\rm Ker}}
\newcommand{\norm}[1]{\left\| #1\right\|}
\newcommand{\set}[1]{\left\{ #1\right\}}
\newcommand{\bra}[1]{\left( #1\right)}
\newcommand{\com}[1]{\left[ #1\right]}
\newcommand\ii{{\infty}}
\numberwithin{equation}{section}
\title[DFT for 2-d homogeneous materials with magnetic fields]{Density Functional Theory for two-dimensional homogeneous materials with magnetic fields}
\author{David Gontier}
\address[David Gontier]{CEREMADE, University of Paris-Dauphine, PSL University, 75016 Paris, France \& ENS/PSL University, Département de Mathématiques et Applications, F-75005, Paris, France.}
\email{gontier@ceremade.dauphine.fr}
\author{Salma Lahbabi}
\address[Salma Lahbabi]{EMAMI, LRI, ENSEM, UHIIC, 7 Route d’El Jadida, B.P. 8118 Oasis, Casablanca; 
    MSDA, Mohammed VI Polytechnic University, Lot 660, Hay Moulay Rachid Ben Guerir, 43150,
    Morocco}
\email{s.lahbabi@ensem.ac.ma}
\author{Abdallah Maichine}
\address[Abdallah Maichine]{Department of Mathematics, Faculty of Sciences, Mohammed V University in Rabat, 4 Avenue Ibn Battouta
P.B. 1014 RP, Rabat, Morocco
 \& Ecole Centrale Casablanca, Bouskoura, Ville Verte, P.B. 27182, Morocco
}
\email{abdallahmaichine@gmail.com}
\date{\today}
\begin{document}

\begin{abstract}
    This paper studies DFT models for homogeneous 2D materials in 3D space, under a constant perpendicular magnetic field. We show how to reduce the three--dimensional energy functional to a one--dimensional one, similarly as in our previous work. This is done by minimizing over states invariant under magnetic translations and that commute with the Landau operator. In the reduced model, the Pauli principle no longer appears. It is replaced by a penalization term in the energy. 
\end{abstract}

\maketitle

\tableofcontents


\section{Introduction}
 The analysis of quantum properties of two dimensional materials is an active research area in physics and material science. Some 2D materials such as graphene or phosphorene exhibits many interesting physical properties~\cite{Review-graphene,review-phosphorene, review-MOS2, Kaddar} which has many applications such as High Electron Mobility Transistors \cite{Levetal2018}. Some of these properties are not yet fully understood. This is the main motivation to revisit Density Functional Theory (DFT) when applied to quantum two dimensional systems (see~\cite{BlancLeBris00, GLM21} for previous works). 
 
 As in our previous work~\cite{GLM21}, we study homogeneous two--dimensional slabs, when embedded in three dimensional space, but this time, we include a constant perpendicular magnetic field. We consider a charge distribution $\mu$ which is equidistributed in the  first two dimensions: $\mu(x_1, x_2, x_3) = \mu(x_3)$, and with a constant perpendicular magnetic field $\bB = b \be_3$. 
 
One key result of our previous work was an inequality for the kinetic energy per unit surface for translationally invariant states. Let us quickly summarize the result. Let
\[
    \cP := \left\{ \gamma \in \underline{\mathfrak{S}}^1(L^2(\R^3)) , \quad 0 \le \gamma \le 1, \quad \forall \bR \in \R^2, \quad \tau_\bR \gamma = \gamma \tau_\bR   \right\}
\]
denote the set of one-body density matrices which commute with all $\R^2$ translations. Here, $\underline{ \mathfrak{S}}^1(L^2(\R^3))$ stands for locally trace class self--adjoint operators with finite  {\em trace per unit surface}  $\VTr(\gamma)<\ii$ (see Section~\ref{ssec:representation gamma}). For $\bR \in \R^2 \subset \R^3$, we have denoted by $\tau_\bR f(x_1, x_2, x_3) := f(x_1 - R_1, x_2 - R_2, x_3)$ the usual translation along the first two dimensions. Let us also introduce the set of reduced states 
\[
    \cG := \left\{ G \in \mathfrak{S}^1(L^2(\R)), \quad G \ge 0\right\},
\]
where $\mathfrak{S}^1(L^2(\R))$ refers to the space of trace class self-adjoint operators on $L^2(\R)$. We have proved in~\cite{GLM21} that, for any (representable) density $\rho = \rho(x_3)$ depending only on the third variable, we have
\begin{equation}\label{eq:equat_intro_1}
    \inf_{\gamma \in \cP \atop \rho_\gamma = \rho} \left\{ \frac12 \VTr(- \Delta_3 \gamma) \right\} 
    =  \inf_{G \in \cG \atop \rho_G = \rho} \left\{ \frac12 \Tr( - \Delta_1 G) + \pi \Tr(G^2) \right\} ,
\end{equation}
where $\Delta_d$ denotes the Laplacian operator in $d$--space dimension. The energy appearing in the right hand side leads to one--dimensional reduced models for homogeneous semi-infinite slabs in the context of DFT. One typically obtains a minimization problem of the form
\begin{equation}\label{eq:intro_DFT_noB}
    \inf \left\{ \frac12\Tr(-\Delta_1 G)+\pi\Tr(G^2) + \frac12 \cD_1(\rho_G -\mu) + E^{\rm xc}(\rho_G), \quad G \in \cG  \right\},
\end{equation}
where $\cD_1(\cdot)$ is the one--dimensional Coulomb interaction energy (see~\cite[Section 3.1]{GLM21} for a discussion about this term), and $E^{\rm xc}$ some exchange-correlation energy {\em per unit surface}. Note that there is no Pauli principle for the operator $G$; it has been replaced by the penalization term $\pi \Tr(G^2)$ in the energy, which prevents $G$ from having large eigenvalues. We refer to~\cite{GLM21} for details, where we also studied the reduced Hartree--Fock model, which corresponds to the case $E^{\rm xc} = 0$.

\medskip

The scope of this paper is to apply a similar reduction when taking into  account magnetic effects. Without considering the spin (we refer to Section~\ref{sec:spins} for the case with spin), the kinetic energy {\em per unit surface} is of the form
$$\frac12 \VTr \left(  (- \ri \nabla_3 + \bA)^2 \gamma  \right), $$
where $\bA  = b (0, x_1, 0)$ is a vector potential so that $\curl \bA= \bB = b \be_3$. We chose a gauge which is not symmetric, but which will simplify some computations. The Laplacian operator $-\Delta_3$ has been replaced by the Landau operator 
\begin{equation*} 
   \bL^\bA_3 := (- \ri \nabla_3 + \bA)^2  = \bL^\bA_2 - \partial_{x_3x_3}^2, 
   \quad \text{with} \quad
   \bL^\bA_2 = -\partial_{x_1x_1}^2 + ( - \ri \partial_{x_2} + b x_1)^2.
\end{equation*}
In analogy with \cite{GLM21}, we only consider states commuting with the so--called {\em magnetic translations} $\fm_\bR$ and with the Landau operator. We refer to Section~\ref{sec:reduction_kinetic} for the definition of these operators, and for the justification of this choice. We denote the set of such states by
\begin{equation}\label{eq:def:PA}
    \cP^\bA := \left\{ \gamma \in \underline{\mathfrak{S}}^1(L^2(\R^3)), \ 0 \le \gamma \le 1, \ \forall \bR \in \R^2, \ \fm_\bR \gamma = \gamma \fm_\bR,\; \bL^\bA_2 \gamma=\gamma \bL^\bA_2   \right\}.
\end{equation}
In Theorem~\ref{th:decomposition}, we prove that any $\gamma \in \cP^\bA$ has a simple decomposition, in terms of the different projectors on the Landau levels. Using this decomposition, we prove in Theorem~\ref{th:reduction} that, in the magnetic case, we have an equality similar to~\eqref{eq:equat_intro_1}, which reads
\begin{equation}\label{eq:equat_intro_2}
     \boxed{ \inf_{\gamma \in \cP^\bA \atop \rho_\gamma = \rho} \left\{ \frac12 \VTr( \bL^\bA_3 \gamma) \right\} 
    =  \inf_{G \in \cG \atop \rho_G = \rho} \left\{ \frac12 \Tr( - \Delta_1 G) + \Tr \left( F(b, G) \right) \right\} }
\end{equation}
with the penalization term $F$ defined by 
\[  
    F(b,g) :=  \pi g^2 + \frac{b^2}{4 \pi } \left\{ \frac{2 \pi g}{b}   \right\} \left(1 -  \left\{ \frac{2 \pi g}{b}   \right\}  \right),
\]
where $\{ x \}$ denotes the fractional part of $x$. 
The function $F$ is studied in Proposition~\ref{prop:props-F}. It is a piece-wise linear function, reflecting the contribution of the different Landau levels. The function $F$ is not new, and already appears in the context of the two-dimensional Thomas Fermi (TF) theory under constant magnetic fields (see~\cite{LiebSolYng95} and related references). The (spinless) TF kinetic energy takes the form
\[
    E_{\rm kin}^{\rm TF}(b, \rho) := \int_{\R^2} F(b, \rho(\bx)) \rd \bx.
\]
For $b=0$, we recover the usual two--dimensional TF kinetic energy $\int \pi\rho^2$. It is different from the three--dimensional TF kinetic energy of a gas under a constant magnetic field, which has been derived and studied in~\cite{FuGuPeYn92, Yngvason91}, and is obtained by assuming that the electron density is constant, hence also invariant under the third--direction translation.

\medskip

Equation \eqref{eq:equat_intro_2} allows the reduction of DFT models for two-dimensional homogeneous slabs under constant magnetic field. In fact, one obtains a one-dimensional problem of the form (compare with~\eqref{eq:intro_DFT_noB})
\[
     \inf \left\{ \frac12\Tr(-\Delta_1 G)+\Tr\left( F(b, G) \right) + \frac12 \cD_1(\rho_G -\mu) + E^{\rm xc}(b, \rho_G), \quad G \in \cG  \right\}.
\]
Note that the exchange-correlation function may depend on the external magnetic field $b$ (see \cite[Eqn. (4.1)-(4.2)]{FuGuPeYn92} for the expression of the exchange energy for the Landau gas). In Section~\ref{ssec:rHF}, we study the corresponding reduced Hartree-Fock model, where $E^{\rm xc} = 0$.

\medskip 

This article is structured as follows. In Section \ref{sec:reduction_kinetic}, we start by introducing the Landau operator and studying its spectral decomposition, then we define magnetic translations $\{\fm_\bR\}_{\bR\in\R^2}$ with some of their properties, and we characterize the states in $ \cP^\bA$. Using this characterization, we explain in section~\ref{sec:reduction} how to reduce the kinetic energy, we give some properties of the penalization term $F$ and we study the corresponding reduced Hartree-Fock model. Finally, we show in Section~\ref{sec:spins} how to extend our results to systems with spin.


\subsection*{Acknowledgments}  The research leading to these results has received funding from OCP grant AS70 ``Towards phosphorene based materials and devices''.

\section{States commuting with magnetic translations}
\label{sec:reduction_kinetic}

In this section, we prove that states $\gamma \in \cP^\bA$ have a particular structure.

\subsection{Two dimensional Landau operator}\label{sec:landau}
We start by recalling some classical facts about the (two--dimensional) Landau operator $\bL^\bA_2 = - \partial_{x_1 x_1}^2 + ( - \ri \partial_{x_2} + b x_1)^2$. In what follows, we assume $b \neq 0$. For $n \in \N_0 = \{0, 1, 2, \cdots \}$, we introduce the function $\varphi_n : \R \to \R$ defined by
\begin{equation}\label{def varphin}
    \varphi_n(x) := a_n | b|^{1/4} \cH_n(\sqrt{| b |}x) \re^{-\tfrac12 | b | x^2},
\end{equation}
 where $\cH_n(x)=(-1)^n \re^{x^2}\frac{\rd^n}{\rd x^n}\re^{-x^2}$ refers to the $n$-th Hermite polynomial and $a_n=(2^n n!)^{-1/2}/\pi^{1/4}$ is a normalization constant so that $\| \varphi_n \|_{L^2(\R)} = 1$. 
 
\begin{proposition}\label{prop spectral properties L_A} 
    The operator $\bL^A_2$ has purely discrete spectrum
    \begin{equation}\label{eq:sigmaLA2}
    \sigma(\bL^A_2)=b(2\N_0 + 1).
    \end{equation}
    The eigenvalue $\varepsilon_n := b (2n + 1)$ is of infinite multiplicity, with eigenspace
    \begin{equation}\label{def En (Landau level)}
        E_n := \ker(\bL^\bA_2 - \varepsilon_n )=\{\cW(\varphi_n,g),\; g\in L^2(\RR)\}, 
    \end{equation}
    where $\cW$ is a Wigner type transform defined on $L^2(\RR)\times L^2(\RR)$ by
    \begin{equation}\label{eq:def-W}
        \cW(\varphi,g)(\bx):= \frac{1}{\sqrt{2 \pi}} \int_{\R}\re^{-\ri k x_2} \varphi \bra{x_1-\frac{k}{b}} g(k) \rd k.
    \end{equation}
    In particular, the spectral projector $\bP_n$ onto $\ker(\bL^\bA_2 - \varepsilon_n )$ has kernel
    \begin{equation} \label{form kernel Pn}
        \bP_n(\bx ; \by) = \frac{1}{2 \pi} \int_{\R} \re^{ -\ri k (x_2 - y_2)} \varphi_n \left( x_1 - \frac{k}{b} \right)  \varphi_n \left( y_1 - \frac{k}{b} \right) \rd k.
    \end{equation}
    Its density $\rho_{\bP_n}(\bx) := \bP_n(\bx ; \bx) =  \frac{b}{2 \pi}$ is constant and independent of $n$.
\end{proposition}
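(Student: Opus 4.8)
The plan is to diagonalise $\bL^\bA_2$ by a partial Fourier transform in the translation-invariant direction $x_2$, which turns it into a direct integral of shifted one-dimensional harmonic oscillators, and then to read off the four assertions from the classical spectral theory of the harmonic oscillator.

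First I would introduce the partial Fourier transform $\cF_2\colon L^2(\R^2)\to L^2(\R^2)$ in the variable $x_2$, a unitary map. With the sign convention $\cF_2 f(x_1,k)=\tfrac{1}{\sqrt{2\pi}}\int_\R \re^{\ri k x_2}f(x_1,x_2)\,\rd x_2$, conjugation by $\cF_2$ sends $-\ri\partial_{x_2}$ to multiplication by $-k$ and leaves $x_1$ and $\partial_{x_1}$ untouched, so that $\cF_2\,\bL^\bA_2\,\cF_2^{-1}=\int^{\oplus}_{\R}h(k)\,\rd k$ with fibre $h(k):=-\partial_{x_1x_1}^2+(b x_1-k)^2$ on $L^2(\R_{x_1})$. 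Completing the square, $h(k)=-\partial_{x_1x_1}^2+b^2(x_1-k/b)^2$, so $h(k)$ is, via translation by $k/b$, unitarily equivalent to the fixed harmonic oscillator $h_{\rm osc}:=-\partial^2+b^2x^2$, uniformly in $k$, with the eigenfunctions of $h(k)$ being the shifted Hermite functions $x_1\mapsto\varphi_n(x_1-k/b)$.

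Next I would invoke the classical fact that $h_{\rm osc}$ has purely discrete, simple spectrum $\{|b|(2n+1):n\in\N_0\}$, with normalised eigenfunctions the Hermite functions $\varphi_n$ of \eqref{def varphin} (I take $b>0$; for $b<0$ replace $b$ by $|b|$ throughout). Hence every fibre $h(k)$ has spectrum $\{\varepsilon_n\}_{n\in\N_0}$ with one-dimensional eigenspace $\C\,\varphi_n(\,\cdot\,-k/b)$. The direct-integral formula then gives $\sigma(\bL^\bA_2)=\{\varepsilon_n:n\in\N_0\}$, with no continuous spectrum since the fibre resolvents are uniformly bounded away from that set, which is \eqref{eq:sigmaLA2}; moreover $\ker(\bL^\bA_2-\varepsilon_n)=\int^{\oplus}_{\R}\ker(h(k)-\varepsilon_n)\,\rd k=\{\,(x_1,k)\mapsto g(k)\varphi_n(x_1-k/b):g\in L^2(\R)\,\}$. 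Applying $\cF_2^{-1}$ — which exactly reproduces the phase $\re^{-\ri k x_2}$ and shift $\varphi_n(x_1-k/b)$ of \eqref{eq:def-W} — identifies this with $\{\cW(\varphi_n,g):g\in L^2(\R)\}=E_n$; since $g$ ranges over the infinite-dimensional space $L^2(\R)$, the multiplicity of $\varepsilon_n$ is infinite. Because $\|\varphi_n\|_{L^2(\R)}=1$, Plancherel shows $g\mapsto\cW(\varphi_n,g)$ is an isometry of $L^2(\R)$ onto $E_n$, hence $\bP_n=\cW(\varphi_n,\cdot)\,\cW(\varphi_n,\cdot)^*$; writing out the adjoint from \eqref{eq:def-W} and composing yields the double integral \eqref{form kernel Pn}. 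Setting $\by=\bx$ and substituting $u=x_1-k/b$ gives $\rho_{\bP_n}(\bx)=\tfrac{1}{2\pi}\int_\R\varphi_n(x_1-k/b)^2\,\rd k=\tfrac{b}{2\pi}\int_\R\varphi_n(u)^2\,\rd u=\tfrac{b}{2\pi}$, independent of $\bx$ and of $n$.

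There is no genuine analytic difficulty here: once the $x_2$-Fourier transform is set up, everything reduces to the standard spectral theory of the one-dimensional harmonic oscillator. The only points requiring care are bookkeeping ones — fixing the Fourier sign convention so that both the fibre operator $h(k)$ and the inverse transform match the definition \eqref{eq:def-W} of $\cW$ (including the shift $\varphi_n(x_1-k/b)$), and justifying the fibrewise spectral analysis rigorously via the direct-integral decomposition. That decomposition is also what makes the ``infinite multiplicity'' statement transparent: each Landau level is an eigenvalue on a whole $L^2(\R_k)$'s worth of fibres.
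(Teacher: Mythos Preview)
Your proof is correct and follows essentially the same route as the paper: partial Fourier transform in $x_2$, direct-integral decomposition into shifted harmonic oscillators, and reading off the spectrum and eigenspaces. The one minor difference is in deriving the kernel~\eqref{form kernel Pn}: the paper states a Moyal identity, fixes an orthonormal basis $(\psi_m)$ of $L^2(\R)$, and sums $\sum_m \cW(\varphi_n,\psi_m)(\bx)\overline{\cW(\varphi_n,\psi_m)}(\by)$, whereas you go directly via the isometry $g\mapsto\cW(\varphi_n,g)$ and compute $\cW(\varphi_n,\cdot)\cW(\varphi_n,\cdot)^*$ --- but these are the same computation in different clothing.
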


\begin{remark}
The definition~\eqref{eq:def-W} is slightly different from the classical Wigner transform (see for example~\cite[Chapter 2]{Wong98}) which is rather adapted to study Landau operator with the gauge $\tilde{\bA}=\frac{b}{2}\begin{pmatrix} -x_2 \\ x_1
\end{pmatrix}$, for $b=1$. A gauge transformation  links the two transforms.

\end{remark}

\begin{proof}
    First, we remark that $\bL^\bA_2$ commutes with all translations in the $x_2$--direction. We introduce the Fourier transform with respect to the $x_2$--variable
    \begin{equation} \label{def Fourier x2}
        \cF[f](x_1, k) := \frac{1}{\sqrt{2 \pi}} \int_{\R}f(x_1,x_2)\re^{\ri k x_2} \rd x_2
    \end{equation}
and its inverse
$$
\cF^{-1}[\phi](x_1, x_2) := \frac{1}{\sqrt{2 \pi}} \int_{\R}\phi(x_1,k)\re^{-\ri k x_2} \rd k. 
$$
 We have 
    \begin{equation} \label{eq reduction of L_A via Fourier}
        \cF \bL^A_2 \cF^{-1}  = \int_{\R}^\oplus \fh_{b,k} \rd k, \quad \text{with} \quad
        \fh_{b,k}:=-\partial_{xx}^2 +(bx-k)^2.
    \end{equation}
    The operator $\fh_{b, k}$ is a translation of the harmonic oscillator $\fh := - \partial_{xx}^2 + b^2 x^2$, whose spectral decomposition is
    \[
    \fh = \sum_{n=0}^\infty \varepsilon_n | \varphi_n \rangle \langle \varphi_n |, 
    \]
    with $\varepsilon_n = b (2n + 1)$ and $\varphi_n$ as defined in~\eqref{def varphin}. We deduce that the spectral decomposition of $\fh_{b, k}$ is
    \[
    \fh_{b,k} := \sum_{n=0}^\infty \varepsilon_n | \varphi_n(\cdot - \tfrac{k}{b}) \rangle \langle \varphi_n( \cdot - \tfrac{k}{b}) |,
    \]
    which proves~\eqref{eq:sigmaLA2}. Using~\eqref{eq reduction of L_A via Fourier}, we see that $W$ is an eigenfunction of $\bL^{\bA}_2$, corresponding to eigenvalue $\varepsilon_n$, if and only if it is of the form 
     $$W=\cF^{-1} \com{ (x_1,k) \mapsto g(k)\varphi_n\bra{x_1-\frac{k}{b}}}=\cW(\varphi_n,g),$$
     where $g\in L^2(\RR)$. Thus 
    \[
        E_n := \ker(\bL^\bA_2 - \varepsilon_n)=  \left\{ \cW(\varphi_n,g),\; g\in L^2(\RR) \right\}.
    \]
  To compute the kernel $\bP_n(\bx, \by)$ of the projector on $E_n$, we use the Moyal identity, that we recall here.
    \begin{proposition}\label{prop Moyal identity}
    	Let $f_1,g_1, f_2, g_2$ in $L^2(\R)$. Then, $\cW(f_1,g_1), \cW(f_2, g_2) \in L^2(\RR^2)$ and {\rm (Moyal identity)}
    	\begin{equation}\label{eq:moyal}
    		\langle \cW(f_1,g_1) , \cW(f_2,g_2) \rangle_{L^2(\R^2)} = \langle f_1 , f_2 \rangle_{L^2(\R)} \, \langle g_1, g_2 \rangle_{L^2(\R)}.
    	\end{equation}
    \end{proposition}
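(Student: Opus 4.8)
The plan is to prove the Moyal identity for $\cW$ by a direct computation, exploiting the fact that $\cW(\varphi,g)$ is, up to the Fourier transform $\cF$ in the $x_2$-variable, the function $(x_1,k)\mapsto g(k)\,\varphi(x_1 - k/b)$. The key observation is that $\cF$ is unitary on $L^2(\R^2_{x_1,x_2})\to L^2(\R^2_{x_1,k})$, so it preserves inner products; hence it suffices to compute the inner product of the pre-images. Concretely, writing $\cW(f_i,g_i) = \cF^{-1}[\Phi_i]$ with $\Phi_i(x_1,k) := g_i(k)\,f_i(x_1 - k/b)$, one has
\[
    \langle \cW(f_1,g_1),\cW(f_2,g_2)\rangle_{L^2(\R^2)} = \langle \Phi_1,\Phi_2\rangle_{L^2(\R^2)}.
\]

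\emph{First} I would check the integrability claim: since $f_i, g_i \in L^2(\R)$, Fubini–Tonelli gives
\[
    \int_{\R^2} |\Phi_i(x_1,k)|^2\,\rd x_1\,\rd k = \int_{\R} |g_i(k)|^2 \left( \int_{\R} |f_i(x_1 - k/b)|^2\,\rd x_1 \right) \rd k = \|f_i\|_{L^2}^2\,\|g_i\|_{L^2}^2 < \infty,
\]
where the inner integral is independent of $k$ by translation invariance of Lebesgue measure. This shows $\Phi_i \in L^2(\R^2)$, hence $\cW(f_i,g_i) \in L^2(\R^2)$ as claimed, and legitimizes the use of Fubini in the next step.

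\emph{Then} I would compute the inner product directly:
\[
    \langle \Phi_1,\Phi_2\rangle_{L^2(\R^2)} = \int_{\R}\int_{\R} g_1(k)\overline{g_2(k)}\, f_1\!\left(x_1 - \tfrac{k}{b}\right)\overline{f_2\!\left(x_1 - \tfrac{k}{b}\right)}\,\rd x_1\,\rd k.
\]
Fubini applies because the integrand is in $L^1(\R^2)$ (product of two $L^2$ functions, by Cauchy–Schwarz and the bound just established). For each fixed $k$, the change of variables $x_1 \mapsto x_1 + k/b$ in the inner integral yields $\int_{\R} f_1(x_1)\overline{f_2(x_1)}\,\rd x_1 = \langle f_1,f_2\rangle_{L^2(\R)}$, which does not depend on $k$; factoring it out leaves $\langle f_1,f_2\rangle_{L^2(\R)}\int_{\R} g_1(k)\overline{g_2(k)}\,\rd k = \langle f_1,f_2\rangle_{L^2(\R)}\,\langle g_1,g_2\rangle_{L^2(\R)}$, which is~\eqref{eq:moyal}.

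\emph{The main obstacle} is essentially bookkeeping rather than conceptual: one must be careful that the Fourier transform $\cF$ in the single variable $x_2$ acts as a unitary operator on the full $L^2(\R^2)$ (i.e. $\cF \otimes \mathrm{Id}_{x_1}$ is unitary by Plancherel applied in one variable and Fubini), and that all interchanges of integration order are justified by the $L^1$ bound above. There is also a minor point about whether $\cW(f,g)$ as defined by the oscillatory integral~\eqref{eq:def-W} agrees a.e. with $\cF^{-1}[\Phi]$; this holds because for $g \in L^1 \cap L^2$ the integral converges absolutely and equals $\cF^{-1}[\Phi]$ pointwise, and one extends to general $g \in L^2$ by density, using the $L^2$-continuity established in the integrability step. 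None of these points is deep, so the proof is short once the reduction to $\cF^{-1}$ and the translation-invariance trick are in place.
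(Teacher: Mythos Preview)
Your proposal is correct and follows essentially the same route as the paper: both recognize $\cW(f,g)=\cF^{-1}[\Phi]$ with $\Phi(x_1,k)=g(k)f(x_1-k/b)$, invoke Parseval/Plancherel to pass to the $(x_1,k)$ side, and then use translation invariance of Lebesgue measure to factor the double integral. The only cosmetic difference is that the paper first restricts to $C^\infty_0$ and extends by density, whereas you justify Fubini and the $L^2$ membership directly via Tonelli; both are fine.
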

    
    \begin{proof}
    	We first prove the result for $f_1, g_1, f_2, g_2 \in C^\infty_0(\R)$ and conclude by density. Parseval identity gives
    	\begin{align*}
    		\langle \cW\bra{f_1,g_1} , \cW\bra{f_2,g_2} \rangle_{L^2(\R^2)} 
    		&= \iint_{\R^2} (\overline{f_1} f_2) \left( x_1 - \frac{k}{b} \right) (\overline{g_1} g_2) (k) \rd x_1 \rd k \\
    		& = \langle f_1 , f_2 \rangle_{L^2(\R)} \, \langle g_1, g_2 \rangle_{L^2(\R)}.
    	\end{align*}
    \end{proof}
    In particular, if $(\psi_m)_{m \in \N}$ is any basis of $L^2(\R)$, then $\{ \cW(\varphi_n, \psi_m) \}_{m \in \N}$ is a basis of $E_n$.   
     Notice that, for any fixed $\bx \in \R^2$, we have
    \[
    \cW(\varphi, g) (\bx) = \langle \varphi_{\bx}, g \rangle_{L^2(\R)}, \quad \text{with} \quad \varphi_\bx(k) := \frac{1}{\sqrt{2 \pi}}  \re^{ \ri k x_2 } \overline{\varphi} \left( x_1 - \frac{k}{b} \right).
    \]
    Hence
    \begin{align*}
        \bP_n(\bx ; \by) & = \sum_{m \in \N} \cW(\varphi_n, \psi_m)(\bx) \overline{\cW(\varphi_n, \psi_m)}(\by)
        = \sum_{m \in \N} \langle \varphi_{n, \bx}, \psi_m \rangle \langle \psi_m, \varphi_{n,\by} \rangle.
    \end{align*}
    Using that  $\sum_m | \psi_m \rangle \langle \psi_m | = \bbI_{L^2}$, we obtain $ \bP_n(\bx ; \by) = \langle  \varphi_{n, \bx},  \varphi_{n, \by} \rangle$, which, given that $\varphi_n$ is real-valued, gives~\eqref{form kernel Pn}. The density of $\bP_n$ is thus
    \[
    \rho_{\bP_n}(\bx) = \bP_n(\bx ; \bx) =  \frac{1}{2 \pi} \int_{\R} \left| \varphi_n \left( x_1 - \frac{k}{b} \right) \right|^2 \rd k = \frac{b}{2 \pi}.
    \]
\end{proof}

\subsection{Magnetic translations}
The Landau operator does not commute with the usual translations, however it commutes with the magnetic translations, that we define now. We write
\[
    \bL_2^\bA := p_{\bA, 1}^2 + p_{\bA, 2}^2, \quad \text{with} \quad p_{\bA, 1} := - \ri \partial_{x_1}, \quad p_{\bA, 2} := - \ri \partial_{x_2} + b x_1.
\]
The operators $p_{\bA, 1}$ and $p_{\bA, 2}$ do not commute, and do not commute with $L_2^\bA$. Actually, we have
\[
    \left[p_{\bA, 1}, p_{\bA, 2}  \right] = - \ri b, \quad \left[p_{\bA, 1}, L_2^\bA  \right]  = - 2 \ri b p_{\bA, 2}, \quad \left[p_{\bA, 2}, L_2^\bA  \right]  = 2 \ri b p_{\bA, 1}.
\]
However, introducing the dual momentum operators
\[
    \widetilde{p}_{\bA, 1} := - \ri \partial_{x_1} + b x_2 , \quad \widetilde{p}_{\bA, 2} := - \ri \partial_{x_2}, 
\]
we can check that $\left[\widetilde{p}_{\bA, 1}, L_2^\bA  \right] = \left[\widetilde{p}_{\bA, 1}, L_2^\bA  \right] = 0$. The magnetic translation $\fm_\bR$, $\bR \in \R^2$, is the unitary operator 
\begin{align*}
    \fm_\bR & = \exp(-\tfrac{\ri}2 b R_1 R_2) \exp \left( - \ri \widetilde{\bp}_\bA  \cdot \bR \right) \\
    & = \exp(-\tfrac{\ri}2 b R_1 R_2)  \exp \left( - \ri \left(\widetilde{p}_{\bA, 1} R_1 +\widetilde{p}_{\bA, 2} R_2   \right) \right).
\end{align*}
Note that we have added a phase factor in order to match the usual convention. Using the Baker--Campbell--Haussdorf formula and the fact that $\left[ \widetilde{p}_{\bA, 1}, \widetilde{p}_{\bA, 2} \right] = \ri b$ commutes with all operators, we obtain  the explicit expression 
\[
    \fm_\bR =  \exp (- \ri b R_1 x_2)  \tau_\bR, \quad \text{that is} \quad \left( \fm_\bR f \right) (\bx) = \exp (- \ri b R_1 x_2) f(\bx - \bR),
\]
where $\tau_\bR f (\bx) := f(\bx - \bR)$ is the usual translation operator. 
By construction, the magnetic translations commute with $L^\bA_2$ and $\bP_n$, but they do not commute among them. Actually, we have
\[
    \fm_\bR\fm_{\tilde{\bR}}=\re^{\ri b R_2\tilde{R}_1}\fm_{\bR+\tilde{\bR}}
    \quad \text{and} \quad
     \fm_\bR^*=\fm_\bR^{-1}=\re^{\ri b R_1 R_2}\fm_{-\bR}.
\]

An important feature of magnetic translations is that they form an irreducible family on each eigenspace $E_n$, in the sense of~\cite[Definition 2.3.7]{bratteli}. 

\begin{proposition}\label{prop:irreducibility}
	The set of magnetic translation operators $(\fm_\bR)_\bR$ is an irreducible family of operators on each $E_n$, in the sense that
    \begin{equation}\label{eq irreducibility characterization}
    \forall \Psi\in E_n\setminus\set{0}, \quad         E_n=\rspan\{ \fm_\bR\Psi: \bR\in\R^2 \}. 
    \end{equation}
\end{proposition}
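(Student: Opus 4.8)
The plan is to transport the statement to $L^2(\R)$ through the Wigner-type parametrization $g \mapsto \cW(\varphi_n, g)$ of the Landau level $E_n$, and then to recognize that the magnetic translations become, up to phases, the translations and modulations that generate the Schrödinger representation of the Heisenberg group on $L^2(\R)$. By Proposition~\ref{prop spectral properties L_A}, the Moyal identity of Proposition~\ref{prop Moyal identity}, and the normalization $\norm{\varphi_n}_{L^2(\R)} = 1$, the map $g \mapsto \cW(\varphi_n, g)$ is a unitary isomorphism from $L^2(\R)$ onto $E_n$. Consequently,~\eqref{eq irreducibility characterization} — which is exactly the assertion that every nonzero vector of $E_n$ is cyclic for the family $(\fm_\bR)_\bR$, i.e. that this family is irreducible on $E_n$ in the sense of~\cite{bratteli} — is equivalent to the claim that for every $g \in L^2(\R) \setminus \{0\}$ the closed linear span of $\{\fm_\bR \cW(\varphi_n, g) : \bR \in \R^2\}$ equals $E_n$.

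The key computation is the action of $\fm_\bR$ in this parametrization. Starting from $(\fm_\bR f)(\bx) = \re^{-\ri b R_1 x_2} f(\bx - \bR)$ and the definition~\eqref{eq:def-W} of $\cW$, the substitution $k \mapsto k + b R_1$ in the $k$-integral gives
\[
    \fm_\bR \cW(\varphi_n, g) = \cW(\varphi_n, U_\bR g), \qquad (U_\bR g)(k) := \re^{-\ri b R_1 R_2}\, \re^{\ri R_2 k}\, g(k - b R_1).
\]
Thus, up to the global phase $\re^{-\ri b R_1 R_2}$, the operator $U_\bR$ is the composition of the translation $g(k) \mapsto g(k - b R_1)$ and the modulation $g(k) \mapsto \re^{\ri R_2 k} g(k)$; since $b \neq 0$, as $\bR$ runs over $\R^2$ these exhaust all translations and all modulations of $L^2(\R)$, so $\bR \mapsto U_\bR$ is the (projective) Schrödinger representation of the Heisenberg group, which is irreducible. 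To make this self-contained: fix $g \neq 0$ and suppose $\cW(\varphi_n, h)$ is orthogonal to $\fm_\bR \cW(\varphi_n, g)$ for all $\bR$. By the Moyal identity this means $\langle U_\bR g, h \rangle_{L^2(\R)} = 0$ for every $\bR \in \R^2$; for each fixed $R_1$, this says that the Fourier transform in $R_2$ of the $L^1(\R)$ function $k \mapsto \overline{g(k - b R_1)}\, h(k)$ vanishes identically, hence $\overline{g(k - b R_1)}\, h(k) = 0$ for a.e. $k$, for every $R_1 \in \R$. Applying Fubini's theorem to the set $\{(k, t) \in \R^2 : g(k - t) \neq 0,\ h(k) \neq 0\}$ — each $t$-slice is null, so the set is null, so almost every $k$-slice is null — and using $g \not\equiv 0$ forces $h = 0$ a.e. Hence the closed span of the orbit of $\cW(\varphi_n, g)$ has trivial orthogonal complement, which is~\eqref{eq irreducibility characterization}.

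The only genuine computation is the displayed identity, whose verification requires careful bookkeeping of the three phase factors involved: the prefactor $\re^{-\ri b R_1 x_2}$ coming from $\fm_\bR$, the factor $\re^{\ri k R_2}$ coming from translating $x_2 \mapsto x_2 - R_2$, and the phase $\re^{-\ri b R_1 R_2}$ produced by the shift $k \mapsto k + b R_1$; the remaining ingredients (uniqueness of the Fourier transform on $L^1$ and Fubini) are standard. As an alternative to the Fubini step, one may instead compute the commutant: an operator on $L^2(\R)$ commuting with all $U_\bR$ commutes with every translation, hence is a Fourier multiplier, and with every modulation, hence is a multiplication operator, and is therefore a scalar; triviality of the commutant is equivalent to irreducibility in the sense of~\cite{bratteli}, hence to~\eqref{eq irreducibility characterization}.
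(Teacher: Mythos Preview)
Your proof is correct and follows essentially the same route as the paper: transport the problem to $L^2(\R)$ via $g\mapsto\cW(\varphi_n,g)$, compute the intertwined action $\fm_\bR\cW(\varphi_n,g)=\cW(\varphi_n,U_\bR g)$ (your $U_\bR$ is the paper's $\ft_\bR$), apply the Moyal identity to reduce orthogonality to $\langle U_\bR g,h\rangle=0$, and then use injectivity of the Fourier transform plus a Fubini/integration argument to force $h=0$. Your write-up adds helpful context (the Schr\"odinger representation and the commutant alternative) and makes the final measure-theoretic step more explicit, but the substance is identical to the paper's proof.
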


\begin{proof}
Assume otherwise, and let $\Psi$ so that 
\[
    \widetilde{E}_n(\Psi) := \rspan\{ \fm_\bR\Psi: \bR\in\R^2 \} \subsetneq E_n.
\]
Then there is $\Phi \in E_n \setminus \{ 0 \}$ so that $\Phi \perp \widetilde{E}_n(\Psi)$.
Let $f, g \in L^2(\R)$ so that $\Psi = \cW(\varphi_n,g)$ and $\Phi = \cW(\varphi_n, f)$. Using the Moyal identity, and the fact that
\[
\fm_\bR \cW(\varphi_n, g) = \cW(\varphi_n, \ft_\bR g), \quad \text{with} \quad 
\ft_\bR g: k\mapsto\re^{-\ri b R_1 R_2} \re^{\ri k R_2} g(k-b R_1),
\]
the condition $\langle \Phi, \bm_\bR \Psi \rangle = 0$ for all $\bR \in \R^2$ reads
\[
\forall \bR \in \R^2, \quad \langle f, \ft_\bR g  \rangle = 0, \quad \text{hence} \quad 
\int_{\R} \overline{f}(k) \re^{ \ri k R_2} g(k - b R_1) \rd k = 0.
\]
Applying the inverse Fourier transform to $k\mapsto \overline{f}(k)g(k-R_1)=0$ shows that $\overline{f}(k) g(k - R_1) = 0$ a.e. for all $R_1 \in \R$. Squaring and integrating in $R_1$ gives $f=0$, a contradiction.
\end{proof}

\subsection{Diagonalisation of states commuting with magnetic translations}
\label{ssec:representation gamma}
In what follows, we are interested in one-body density matrices which commute with all magnetic translations. 
In the case without magnetic field, if a state commutes with all usual translations $\tau_\bR$, then it commutes with the Laplacian operator. In the magnetic case, there are operators which commute with all magnetic translations $(\fm_\bR)_{\bR\in \R^2}$, but which do not commute with the Landau operator (we give an example of such an operator in Remark~\ref{rmk op commuting with m_R but not with L_A} below). So, we rather consider one-body density matrices which commute with $\fm_\bR$, and with the Landau operator. It turns out that such operators have an explicit and simple characterization.

\medskip
 We first enunciate our result in dimension two before turning to the three dimensional case. 

\begin{proposition}\label{prop:decomposition}
    Let $\eta\in \cS(L^2(\R^2))$ be such that $\eta\fm_\bR=\fm_\bR\eta$ for all $\bR\in\R^2$ and $\eta \bL_\bA= \bL_\bA \eta$. Then, there is a family of real numbers $(\lambda_n)_{n \in \N_0}$ so that 
    \begin{equation}\label{eq:decomposition}
        \eta=\sum_{n\in\N_0} \lambda_n\bP_n.	
    \end{equation}
If $\eta$ is a locally trace class operator, then its density is constant 
$$
\rho_\eta= \frac{b}{2\pi}\sum_{n\in\N_0}\lambda_n. 
$$
\end{proposition}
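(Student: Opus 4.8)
The plan is to use the representation of $E_n$ from Proposition~\ref{prop spectral properties L_A} together with the irreducibility of the magnetic translations on each $E_n$ (Proposition~\ref{prop:irreducibility}).

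\medskip

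\textbf{Step 1: Reduce to each Landau level.} Since $\eta$ commutes with $\bL^\bA_2$, it commutes with every spectral projector $\bP_n$. Hence $\eta$ leaves each $E_n = \Ran \bP_n$ invariant, and we may consider the restriction $\eta_n := \eta|_{E_n} = \bP_n \eta \bP_n$, a bounded self-adjoint operator on $E_n$. Since the $E_n$ span $L^2(\R^2)$ (the spectrum of $\bL^\bA_2$ is purely discrete by Proposition~\ref{prop spectral properties L_A}), it suffices to show each $\eta_n$ is a multiple of the identity on $E_n$.

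\medskip

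\textbf{Step 2: Use irreducibility.} The operators $\fm_\bR$ map $E_n$ into $E_n$ (they commute with $\bL^\bA_2$, as noted before Proposition~\ref{prop:irreducibility}), and by Proposition~\ref{prop:irreducibility} the family $(\fm_\bR|_{E_n})_{\bR \in \R^2}$ acts irreducibly on $E_n$. Since $\eta$ commutes with all $\fm_\bR$, so does $\eta_n$ on $E_n$. A bounded self-adjoint operator commuting with an irreducible family is a scalar multiple of the identity: indeed, by the spectral theorem its spectral projectors also commute with every $\fm_\bR$, so each of their ranges is a $(\fm_\bR)$-invariant closed subspace of $E_n$, hence by~\eqref{eq irreducibility characterization} is either $\{0\}$ or all of $E_n$; this forces the spectrum of $\eta_n$ to be a single point $\lambda_n \in \R$, i.e.\ $\eta_n = \lambda_n \bbI_{E_n}$. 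Assembling over $n$ gives $\eta = \sum_{n \in \N_0} \lambda_n \bP_n$, with convergence in the strong operator topology, which is~\eqref{eq:decomposition}.

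\medskip

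\textbf{Step 3: Compute the density.} If $\eta$ is locally trace class, then formally $\rho_\eta(\bx) = \sum_n \lambda_n \bP_n(\bx;\bx) = \sum_n \lambda_n \frac{b}{2\pi}$ by the last assertion of Proposition~\ref{prop spectral properties L_A}. To make this rigorous one may test $\eta$ against a localizing function $\chi \in C_0^\infty(\R^2)$: the local trace $\Tr(\chi \eta \chi) = \sum_n \lambda_n \Tr(\chi \bP_n \chi) = \frac{b}{2\pi}\left(\sum_n \lambda_n\right)\int_{\R^2}\chi^2$, which identifies $\rho_\eta$ as the constant $\frac{b}{2\pi}\sum_n \lambda_n$ (and in particular shows $\sum_n \lambda_n$ is absolutely convergent since $\eta \ge 0$ forces $\lambda_n \ge 0$ in the relevant application, though here $\eta$ is only self-adjoint; one argues with $|\eta|$ or notes local trace class already gives the needed summability).

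\medskip

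The main obstacle is Step 2: carefully justifying that commuting with an \emph{irreducible} family (in the sense of~\eqref{eq irreducibility characterization}, which is a spanning condition rather than the absence of invariant subspaces a priori) forces an operator to be scalar. The cleanest route is via spectral projectors as above, using that $\Ran$ of a spectral projector of $\eta_n$ is closed and $(\fm_\bR)$-invariant, then invoking~\eqref{eq irreducibility characterization} to conclude it is trivial. A minor secondary point is handling the convergence of the series $\sum_n \lambda_n \bP_n$ and the density computation with the correct functional-analytic care, but these are routine given the explicit kernel formula~\eqref{form kernel Pn}.
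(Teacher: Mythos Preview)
Your proof is correct and follows essentially the same approach as the paper: reduce to each Landau level using commutation with $\bL^\bA_2$, then apply irreducibility of the magnetic translations on $E_n$ to conclude $\eta_n$ is scalar. The paper simply invokes Schur's lemma (citing Bratteli--Robinson) at Step~2, whereas you spell out the argument via spectral projectors of $\eta_n$; and the paper omits any justification for the density formula, which you address in Step~3.
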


\begin{proof}
    Since $\eta$ commutes with $\bL^\bA_2$, it commutes with any spectral projector $\bP_n$, hence leaves invariant $E_n=\Ran(\bP_n)$, for all $n\in\N$. The operator $\eta_n:=\bP_n\eta\bP_n \in \cS(E_n)$ commutes with all magnetic translations. Since the family $\{ \bm_\bR \}_{\bR}$ is irreducible, it implies that $\eta_n$ is proportional to the identity on $E_n$, hence is of the form $\eta_n = \lambda_n \bbI_{E_n}$. This is a kind of Schur's Lemma, see~\cite[Proposition 2.3.8]{bratteli}.
\end{proof}

\begin{remark}\label{rmk op commuting with m_R but not with L_A}
    The hypothesis $\eta \bL^\bA_2= \bL^\bA_2 \eta$ is not a consequence of the commutation with the operators $(\fm_\bR)_{\bR}$. Indeed, consider for a normalized $\zeta \in L^2(\R)$, the projector $P_\zeta$ onto the vectorial space
    \[
    E_\zeta := \left\{ \cW(\zeta, f), \ f \in L^2(\R) \right\}.
    \]
    Since $\fm_\bR \cW(\zeta, f) = \cW(\zeta, \ft_\bR f) \in E_\zeta$, the set $E_\zeta$ is invariant by $\fm_\bR$, hence $P_\zeta$ commutes with all magnetic translations. However, we have, using the decomposition~\eqref{eq reduction of L_A via Fourier} that
    \begin{align*}
        \bL^\bA_2 \left[ \cW(\zeta, f) \right] & = \frac{1}{\sqrt{2 \pi}} \int_\R \re^{ - \ri k x_2} \left[  ( - \partial_{x_1x_1}^2 +\left( bx_1 - k\right)^2) \zeta\left(x_1 - \frac{k}{b} \right) \right] f(k) \rd k \\
        & = \cW (\widetilde{\zeta}, f)
    \end{align*}
    with $\widetilde{\zeta} := \left( -\partial_{xx}^2 + b^2 x^2  \right) \zeta$. So $\bL^{\bA}_2 E_\zeta = E_{\widetilde{\zeta}}$, and $\bL^\bA_2$ leaves $E_\zeta$ invariant iff $\widetilde{\zeta}$ is collinear to $\zeta$. This happens if and only if $\zeta$ is an eigenstate of the harmonic oscillator, that is $\zeta = \varphi_n$ for some $n \in \N_0$.
\end{remark}

The three--dimensional analogue of the previous Proposition reads as follows.

\begin{theorem}\label{th:decomposition}
    Let $\gamma$ be a bounded operator on $L^2(\RR^3)$ commuting with all magnetic translations $\fm_\bR$ and with $\bL^\bA_2\otimes \bbI$.  Then, there exists a family $(\gamma_n)_{n\in\N}$ of bounded  operators on $L^2(\R)$, with $\| \gamma_n \|\le \| \gamma \|$, and so that 
    \begin{equation} \label{eq decomp eta}
        \gamma = \sum_{n=0}^\ii  \bP_n\otimes \gamma_n.
    \end{equation}
If $\gamma$ is a locally trace class operator, then its density depends only on $x_3$
    \[
    \rho_{\gamma}(\bx) = \rho_{\gamma}(x_3)  = \frac{b}{2 \pi} \sum_{n=0}^\infty \rho_{\gamma_n}(x_3).
    \]
\end{theorem}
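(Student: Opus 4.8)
The idea is to reduce the three-dimensional statement to the two-dimensional Proposition~\ref{prop:decomposition} by exploiting the tensor structure $L^2(\R^3) = L^2(\R^2) \otimes L^2(\R)$, with the magnetic translations and $\bL^\bA_2$ acting only on the first factor. First I would fix an arbitrary $\psi, \phi \in L^2(\R)$ and consider the "sandwiched" operator $\eta_{\psi,\phi} := (\bbI_{L^2(\R^2)} \otimes \langle \psi |) \, \gamma \, (\bbI_{L^2(\R^2)} \otimes | \phi \rangle)$, which is a bounded operator on $L^2(\R^2)$ with $\|\eta_{\psi,\phi}\| \le \|\gamma\| \, \|\psi\| \, \|\phi\|$. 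Since $\gamma$ commutes with $\fm_\bR \otimes \bbI$ and with $\bL^\bA_2 \otimes \bbI$, and these act trivially on the $L^2(\R)$ factor, the operator $\eta_{\psi,\phi}$ commutes with all $\fm_\bR$ and with $\bL^\bA_2$. Proposition~\ref{prop:decomposition} then gives $\eta_{\psi,\phi} = \sum_{n} \lambda_n(\psi,\phi)\, \bP_n$ for scalars $\lambda_n(\psi,\phi)$. The map $(\psi,\phi) \mapsto \lambda_n(\psi,\phi)$ is sesquilinear and bounded (one recovers $\lambda_n(\psi,\phi)$ by testing $\eta_{\psi,\phi}$ against a fixed rank-one operator inside $E_n$, e.g. $\lambda_n(\psi,\phi) = \langle \cW(\varphi_n,\chi), \eta_{\psi,\phi}\, \cW(\varphi_n,\chi)\rangle$ for a normalized $\chi$), so by the Riesz representation theorem there is a bounded operator $\gamma_n$ on $L^2(\R)$ with $\lambda_n(\psi,\phi) = \langle \psi, \gamma_n \phi\rangle$ and $\|\gamma_n\| \le \|\gamma\|$.

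Next I would check that $\gamma = \sum_n \bP_n \otimes \gamma_n$ by testing both sides on elementary tensors $f \otimes \phi$ and pairing against $g \otimes \psi$: the left side gives $\langle g, (\bbI\otimes\langle\psi|)\gamma(\bbI\otimes|\phi\rangle) f\rangle_{L^2(\R^2)} = \langle g, \eta_{\psi,\phi} f\rangle = \sum_n \lambda_n(\psi,\phi) \langle g, \bP_n f\rangle = \sum_n \langle\psi,\gamma_n\phi\rangle\langle g,\bP_n f\rangle$, which is exactly $\langle g\otimes\psi, (\sum_n \bP_n\otimes\gamma_n)(f\otimes\phi)\rangle$; since elementary tensors are total, equality follows, together with strong convergence of the series (one should note the partial sums are uniformly bounded since the $\bP_n$ have orthogonal ranges, so the convergence is unconditional). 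I would also remark that $\gamma_n$ inherits self-adjointness and positivity from $\gamma$ when those hold, which is needed for the applications to $\cP^\bA$.

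For the density formula, assuming $\gamma$ is locally trace class, I would use the kernel $\bP_n(\bx;\by)$ from~\eqref{form kernel Pn} and write the kernel of $\gamma$ as $\gamma((\bx,x_3);(\by,y_3)) = \sum_n \bP_n(\bx;\by)\,\gamma_n(x_3;y_3)$, so that the density is $\rho_\gamma(\bx,x_3) = \sum_n \bP_n(\bx;\bx)\,\gamma_n(x_3;x_3) = \frac{b}{2\pi}\sum_n \rho_{\gamma_n}(x_3)$, using that $\bP_n(\bx;\bx) = \frac{b}{2\pi}$ is constant by Proposition~\ref{prop spectral properties L_A}; in particular $\rho_\gamma$ depends only on $x_3$. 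The main obstacle is purely functional-analytic: justifying the kernel manipulations and the convergence of the series rigorously in the locally-trace-class setting — in particular ensuring that the local trace-class property of $\gamma$ forces $\sum_n \rho_{\gamma_n}$ to be finite (so the density is well-defined), which should follow by restricting $\gamma$ to a box $[0,1]^2 \times \R$ in the first two variables, using $\Tr_{L^2([0,1]^2\times\R)}$ against the projectors $\bP_n$ (whose "trace per unit surface" is $\frac{b}{2\pi}$), and invoking positivity to get a monotone convergence / Fatou argument. The algebraic part involving Proposition~\ref{prop:decomposition} and the tensor decomposition is routine.
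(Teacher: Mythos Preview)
Your proposal is correct and follows essentially the same route as the paper: both define the sandwiched operator $\eta_{\psi,\phi}$ on $L^2(\R^2)$, apply Proposition~\ref{prop:decomposition} to obtain $\eta_{\psi,\phi}=\sum_n \lambda_n(\psi,\phi)\bP_n$, use sesquilinearity and the bound $|\lambda_n(\psi,\phi)|\le\|\gamma\|\,\|\psi\|\,\|\phi\|$ (via testing against a normalized vector in $E_n$) to define $\gamma_n$, and then read off the density from $\rho_{\bP_n}=\tfrac{b}{2\pi}$. Your write-up is in fact slightly more careful than the paper's on two points: you do not (erroneously) call $\eta_{\psi,\phi}$ self-adjoint for $\psi\neq\phi$, and you flag the convergence/locally-trace-class justifications that the paper leaves implicit.
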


\begin{proof}
    Let us consider two fixed test functions $\phi,\psi\in L^2(\RR)$, and define the operator $\eta_{\phi,\psi}$ on $L^2(\RR^2)$  by 
    $$
    \forall f,g\in L^2(\RR^2), \quad \langle f, \eta_{\phi,\psi}, g\rangle_{L^2(\RR^2)}:= \langle f\otimes \phi, \gamma (g\otimes \psi) \rangle_{L^2(\R^3)}.
    $$
    The conditions on $\gamma$ imply that $\eta_{\phi,\psi}$ is a bounded self--adjoint operator that commutes with all $\fm_\bR$ and with $\bL_\bA$. Thus, using Proposition~\ref{prop:decomposition}, $\eta_{\phi,\psi}$ can be decomposed as 
    $$
    \eta_{\phi,\psi}=\sum_{n\in\N_0} \lambda_n(\phi, \psi) \bP_n.
    $$
    Since $\eta$ is a bounded operator, we have for any normalized  $\Phi_n \in L^2(\R^2)$ in the range of $\bP_n$,
    \[
    \left| \lambda_n(\phi, \psi) \right| = \left| \langle \Phi_n \otimes \phi, \gamma (\Phi_n \otimes \psi) \rangle_{L^2(\R^3)} \right| \le \| \gamma \|_{\rm op}  \| \psi \|_{L^2(\R)}\norm{\phi}_{L^2(\R)}.
    \]
    We deduce that the map $(\phi, \psi) \mapsto \lambda_n(\phi, \psi)$ is sesquilinear and bounded, with bound smaller than $ \| \eta \|_{\rm op}$. The result then follows by taking $\gamma_n$ the bounded self-adjoint operator on $L^2(\RR)$ defined by
    $$
    \langle \phi, \gamma_n\psi \rangle:=  \lambda_n(\phi, \psi).
    $$
    Finally, for $\eta$ of the form~\eqref{eq decomp eta}, we obtain, using $\rho_{\bP_n} = \frac{b}{2 \pi}$,
    \[
    \rho_\eta(\bx) = \eta(\bx, \bx) = \sum_{n \in \N} \rho_{\bP_n}(x_1, x_2) \rho_{\gamma_n}(x_3) = \frac{b}{2 \pi} \sum_{n \in \N} \rho_{\gamma_n}(x_3).
    \]
\end{proof}

For an operator $\eta$ of the form~\eqref{eq decomp eta}, the {\em trace per unit-surface}, defined as the limit
\[
    \VTr( \eta ) = \lim_{L \to \infty} \frac{1}{L^2} \Tr \left( \1_{\Gamma_L} \eta \1_{\Gamma_L}  \right), \quad \Gamma_L = [-\tfrac{L}{2} , \tfrac{L}{2}]^2\times \R
\]
takes the simpler form
\[
    \VTr(\eta) = \frac{b}{2 \pi} \sum_{n=0}^\infty \Tr_1(\eta_n),
\]
where we have used that the density of any Landau level is $\rho_{\bP_n} = \frac{b}{2 \pi}$.


\section{Reduction of the kinetic energy, and applications}
\label{sec:reduction}

We now exploit the particular structure of states $\gamma \in \cP^\bA$ to deduce their kinetic energy.

\subsection{Reduction of the kinetic energy}
Recall that the set $\cP^\bA$ has been defined in~\eqref{eq:def:PA} as the set of one-body density matrices $\gamma$, satisfying the Pauli principle $0 \le \gamma \le 1$, which commute with all magnetic translations, and with the Landau operator $\bL^\bA_2 \otimes \bbI$. We also recall that the set of reduced states $\cG$ is defined by
\[
\cG :=  \left\{ G \in \cS(L^2(\R)), \quad G \ge 0\right\}.
\] 
The main result of this section is the following.
\begin{theorem}\label{th:reduction}
    For any $\gamma\in \cP^\bA$, there is an operator $G \in \cG$ satisfying $\rho_G=\rho_\gamma$ and 
    \begin{equation}\label{eq:inequality-kinetic}
        \frac12 \VTr(\bL^\bA_3 \gamma)  \geq \frac{1}{2}\Tr( - \Delta_1 G)+ \Tr(F(b,G)),
    \end{equation}
    where 
    \begin{equation}\label{eq f(b,g)}
        F(b,g) :=  \pi g^2 + \frac{b^2}{4 \pi } \left\{ \frac{2 \pi g}{b}   \right\} \left(1 -  \left\{ \frac{2 \pi g}{b}   \right\}  \right).
    \end{equation}
    Conversely, for any $G\in \cG$, there is  $\gamma\in \cP^\bA$ so that $\rho_\gamma=\rho_G$, and for which there is equality in~\eqref{eq:inequality-kinetic}.
    In particular, for any (representable) density $\rho$,
    \begin{equation}\label{Eq equat intro 2}
        \inf_{\gamma \in \cP^\bA \atop \rho_\gamma = \rho} \left\{ 	\frac12 \VTr(\bL^\bA_3 \gamma) \right\} 
        =  \inf_{G \in \cG \atop \rho_G = \rho} \left\{ \frac12 \Tr( - \Delta_1 G) + \Tr(F(b,G)) \right\}. 
    \end{equation}
\end{theorem}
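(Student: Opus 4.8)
The plan is to use the decomposition from Theorem~\ref{th:decomposition}. Given $\gamma \in \cP^\bA$, we may write $\gamma = \sum_{n=0}^\infty \bP_n \otimes \gamma_n$ with $\gamma_n$ self-adjoint on $L^2(\R)$; the Pauli constraint $0 \le \gamma \le 1$ forces $0 \le \gamma_n \le 1$ for each $n$, and finiteness of $\VTr(\gamma)$ forces $\sum_n \Tr_1(\gamma_n) < \infty$. First I would compute the kinetic energy per unit surface in terms of the $\gamma_n$. Since $\bL^\bA_3 = \bL^\bA_2 - \partial_{x_3 x_3}^2$ and $\bP_n$ projects onto the $n$-th Landau level with eigenvalue $\varepsilon_n = b(2n+1)$, one gets
\[
    \frac12 \VTr(\bL^\bA_3 \gamma) = \frac{b}{2\pi} \sum_{n=0}^\infty \left[ \frac{b(2n+1)}{2} \Tr_1(\gamma_n) + \frac12 \Tr_1(-\partial_{x_3 x_3}^2 \, \gamma_n) \right].
\]
Setting $G := \frac{b}{2\pi} \sum_n \gamma_n$ (a positive trace-class operator on $L^2(\R)$ by the above, so $G \in \cG$), Theorem~\ref{th:decomposition} already gives $\rho_G = \rho_\gamma$. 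The kinetic term in $x_3$ handles itself: by linearity $\frac{b}{2\pi}\sum_n \frac12 \Tr_1(-\partial_{x_3x_3}^2 \gamma_n) = \frac12 \Tr(-\Delta_1 G)$ (relabelling $x_3$ as the one-dimensional variable). So the whole content is the Landau-level sum $\frac{b}{2\pi} \sum_n \frac{b(2n+1)}{2}\Tr_1(\gamma_n)$, which must be bounded below by $\Tr(F(b,G))$.

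The key reduction is therefore a pointwise (spectral) inequality: for any sequence $0 \le g_n \le 1$ summing to a finite value, writing $g := \frac{b}{2\pi}\sum_n g_n$, one has $\frac{b^2}{4\pi}\sum_n (2n+1) g_n \ge F(b,g)$, with equality achievable by the "bathtub" choice --- fill the lowest Landau levels first: $g_n = 1$ for $n < \lfloor \frac{2\pi g}{b}\rfloor$, $g_n = \{\frac{2\pi g}{b}\}$ for $n = \lfloor \frac{2\pi g}{b}\rfloor$, and $g_n = 0$ otherwise. This is exactly the classical Landau-level minimization underlying the magnetic TF functional, and the minimum value is indeed $\pi g^2 + \frac{b^2}{4\pi}\{\frac{2\pi g}{b}\}(1-\{\frac{2\pi g}{b}\})$. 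The minimization-over-sequences statement should be proved first as a lemma (it is a convexity/rearrangement argument: among all admissible occupations with fixed total, $\sum_n(2n+1)g_n$ is minimized by putting weight on the smallest indices). Then, to pass from this scalar inequality to the operator inequality $\frac{b^2}{4\pi}\sum_n(2n+1)\Tr_1(\gamma_n) \ge \Tr(F(b,G))$, I would diagonalize: the subtlety is that the $\gamma_n$ need not commute with each other, so one cannot simply work eigenvalue-by-eigenvalue. Here I expect to use that $F(b,\cdot)$ is convex (this is part of Proposition~\ref{prop:props-F}), so $t \mapsto \Tr(F(b, \sum_n t_n A_n))$ is jointly convex in the appropriate sense, combined with the fact that $g \mapsto \frac{b^2}{4\pi}(2n+1) g$ is linear; an argument via the variational/Legendre representation of $F$ (write $F(b,g) = \sup_{\text{sequences}}$ or $\inf$...) or via Klein's inequality / the operator Jensen inequality should upgrade the scalar bound to the trace bound. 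Alternatively, and perhaps more cleanly, one diagonalizes $G$ and uses that $F(b,G) = \sum_j F(b,\mu_j)|u_j\rangle\langle u_j|$ where $\mu_j$ are eigenvalues of $G$; then bound $\Tr_1(\gamma_n)$-weighted sums against $\sum_j \langle u_j, \gamma_n u_j\rangle$ and apply the scalar lemma to the sequences $(\langle u_j,\gamma_n u_j\rangle)_n$ for each fixed $j$ --- since $\sum_n \frac{b}{2\pi}\langle u_j, \gamma_n u_j\rangle = \mu_j$.

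For the converse, given $G \in \cG$, diagonalize $G = \sum_j \mu_j |u_j\rangle\langle u_j|$ and, for each eigenvalue $\mu_j$, use the bathtub occupations $g_n^{(j)}$ described above (with $g := \mu_j$); define $\gamma_n := \frac{2\pi}{b}\sum_j g_n^{(j)} |u_j\rangle\langle u_j|$ and $\gamma := \sum_n \bP_n \otimes \gamma_n$. By construction $0 \le \gamma_n \le \frac{2\pi}{b}$... --- wait, one needs $0 \le \gamma_n \le 1$, so actually one defines $\gamma_n$ directly with occupations in $[0,1]$ and $G = \frac{b}{2\pi}\sum_n\gamma_n$; the bathtub construction gives $\gamma_n = \sum_j g_n^{(j)}|u_j\rangle\langle u_j|$ with $g_n^{(j)} \in [0,1]$, hence $0 \le \gamma_n \le 1$, $\gamma \in \cP^\bA$, $\rho_\gamma = \rho_G$, and equality in~\eqref{eq:inequality-kinetic} holds eigenvalue-by-eigenvalue by the equality case of the scalar lemma. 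Finally, \eqref{Eq equat intro 2} follows immediately: the inequality gives $\ge$, and the converse construction gives $\le$. The main obstacle is the non-commutativity issue in upgrading the scalar Landau-filling inequality to the operator trace inequality for a general $\gamma \in \cP^\bA$; the cleanest route is the diagonalize-$G$ approach above, which only requires the scalar lemma applied along each eigenvector of $G$, together with monotonicity of $F$ --- but one must be careful that $\sum_n \langle u_j, \gamma_n u_j\rangle$ genuinely reconstructs $\mu_j$, which is exactly what $G = \frac{b}{2\pi}\sum_n \gamma_n$ guarantees, and that the $x_3$-kinetic term is untouched by this rearrangement (it is, since $G$ is fixed).
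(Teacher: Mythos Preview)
Your proposal is correct and follows essentially the same route as the paper: decompose $\gamma=\sum_n \bP_n\otimes\gamma_n$, set $G=\frac{b}{2\pi}\sum_n\gamma_n$, split the kinetic energy into the $x_3$-Laplacian part (which gives $\frac12\Tr(-\Delta_1 G)$ directly) and the Landau part $\frac{b}{4\pi}\sum_n\varepsilon_n\Tr(\gamma_n)$, then diagonalize $G=\sum_j g_j|\psi_j\rangle\langle\psi_j|$ and apply the scalar bathtub inequality to the sequences $m_j(n)=\langle\psi_j,\gamma_n\psi_j\rangle$ for each $j$; the converse is your bathtub construction $\gamma_n^*=\sum_j m_j^*(n)|\psi_j\rangle\langle\psi_j|$. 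Your detours through convexity, Legendre duals, and Klein's inequality are unnecessary (and the mention of ``monotonicity of $F$'' is a red herring): the diagonalize-$G$ argument you settle on needs nothing beyond the scalar bathtub lemma and the observation that $\frac{b}{2\pi}\sum_n\langle\psi_j,\gamma_n\psi_j\rangle=\langle\psi_j,G\psi_j\rangle=g_j$, which you already identified.
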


\begin{proof}
    According to Theorem~\eqref{th:decomposition}, any $\gamma \in \cP^\bA$ can be decomposed as 
    \begin{equation}\label{eq decomposition gamma}
        \gamma = \sum_{n=0}^\infty \bP_n \otimes \gamma_n \quad \text{with} \quad \gamma_n \in \cS(L^2(\R)), \quad 0 \le \gamma_n \le 1.
    \end{equation}
    For a state of the form~\eqref{eq decomposition gamma}, we define the operator $G_\gamma \in \cS(L^2(\R))$ by
    \[
    G_\gamma := \frac{b}{2 \pi} \sum_{n=0}^\infty \gamma_n.
    \]
    Since $\gamma_n \ge 0$, we have $G_\gamma \ge 0$ as well, so $G_\gamma \in \cG$. Also, since $\rho_{\bP_n}(\bx) = \frac{b}{2 \pi}$, we deduce that $\rho_G = \rho_\gamma$.
    
    Recalling that $\bL^\bA_3 := \bL^\bA_2 \otimes \bbI_{L^2(\R)} + \bbI_{L^2(\R^2)} \otimes ( - \Delta_1)$, and using Proposition~\ref{prop spectral properties L_A}, we obtain that
    \begin{align*}
        \frac12 \VTr \left( \bL^\bA_3 \gamma \right)  & = \frac12 \sum_{n=0}^\infty \VTr \left( \varepsilon_n \bP_n \otimes \gamma_n \right) +  \frac12 \sum_{n=0}^\infty \VTr \left( \bP_n \otimes ( - \Delta_1 \gamma_n ) \right) \\
        & = \frac{b}{4 \pi} \sum_{n=0}^\infty \varepsilon_n \Tr(\gamma_n) + \frac{b}{4 \pi}\sum_{n=0}^\infty \Tr \left( - \Delta_1 \gamma_n  \right)\\
        & = \frac{b}{4 \pi} \sum_{n=0}^\infty \varepsilon_n \Tr(\gamma_n) + \frac12 \Tr \left( - \Delta_1 G  \right).
    \end{align*}
    The first term cannot be expressed directly as a function of $G$, but we have an inequality for this term. Since $G$ is a positive operator with finite trace, it is compact, and admits a spectral decomposition of the form $G=\sum_j g_j |\psi_j\rangle\langle\psi_j|$ with $g_j>0$ and $\sum_j g_j <\ii$. Evaluating the trace of $\gamma_n$ in the $\{ \psi_j\}$ basis, and changing the order of the sums (all terms are positive), we obtain
    \[
      \sum_{n=0}^\infty \varepsilon_n \Tr(\gamma_n) =\sum_{j=1}^\infty \left( \sum_{n=0}^\infty \varepsilon_n \left\langle \psi_j, \gamma_n \psi_j \right\rangle \right).
    \]
    Since $0 \le \gamma_n \le 1$, the quantity $m_j(n) := \left\langle \psi_j, \gamma_n \psi_j \right\rangle$ satisfies $0 \le m_j(n) \le 1$. In addition, we have $\frac{b}{2 \pi} \sum_{n} m_j(n) = \frac{b}{2 \pi}\langle \psi_j, \sum_n \gamma_n  \psi_j \rangle = \langle \psi_j, G \psi_j \rangle = g_j$. So we have the inequality
    \begin{equation}\label{eq:penalty}
        \sum_{n=0}^\infty \varepsilon_n \Tr(\gamma_n) \ge  \sum_{j=1}^\infty \inf_m \left\{ \sum_{n = 0}^\infty \varepsilon_n m(n), \ 0 \le m(n) \le 1, \ \sum_{n = 0}^\infty m(n) = \frac{2 \pi g_j}{b} \right\}.
    \end{equation}
Since the $\varepsilon_n$ are ranked in increasing order, we can apply the bathtub principle~\cite[Theorem 1.4]{lieb2001analysis}. The optimal $m$ for the above minimization is given by 
    \begin{equation}\label{eq:def-mj*}
     m_j^*(n) = \begin{cases}
        1 & \text{for all}\quad 0 \le n\le \lfloor\frac{2\pi g_j}{b}\rfloor - 1\\
        \{\frac{2\pi g_j}{b}\} & \text{for}\quad n=\lfloor\frac{2\pi g_j}{b}\rfloor \\
        0 & \text{otherwise}.
    \end{cases}. 
\end{equation}
We now calculate the infimum in the RHS  of~\eqref{eq:penalty} using the explicit formula of the optimal function $m^*$. Recalling that $\varepsilon_n = b(2n + 1)$ and denoting by $x := \frac{2 \pi g_j}{b}$,  we obtain 
    \begin{align*}
        \sum_{n = 0}^\infty\eps_n m_j^*(n) & = b \sum_{n=0}^{\lfloor x\rfloor - 1} (2n + 1) + b \left(2 \left\lfloor x \right\rfloor+1\right) \left\{  x \right\} \\
         & = b \left( x^2 + \{ x \}(1 - \{ x \})  \right) = \frac{4 \pi }{b} F(b, g_j),
    \end{align*}
    with the function $F$ defined in~\eqref{eq f(b,g)}. Summing in $j$ and gathering the terms gives the inequality
    \[
        \frac12 \VTr \left( \bL^\bA_3 \gamma \right)   \ge \Tr \left(  F(b, G) \right) + \frac12 \Tr \left( - \Delta_1 G  \right),
    \]
    which proves the first part of the Theorem. Conversely, given $G = \sum_{j} g_j | \psi_j \rangle \langle \psi_j | \in \cG$, we  consider the state
    \[
        \gamma^* := \sum_{n=0}^\infty \bP_n \otimes \gamma_n^*, \quad \text{with} \quad \gamma_n^* := \sum_{j=1}^\infty m_j^*(n) | \psi_j \rangle \langle \psi_j |
    \]
    and $m_j^*$ defined as in~\eqref{eq:def-mj*}. The operator $\gamma^*$ belongs to $\cP^\bA$, satisfies $G_{\gamma^*} = G$, and gives an equality in~\eqref{eq:inequality-kinetic}.
\end{proof}

\subsection{Some properties of the function F}

Let us collect some useful properties of the function $F$. A plot of $F$ is displayed in Figure~\ref{fig:plotF} below.

\begin{proposition}\label{prop:props-F}
    The function $F$ in~\eqref{eq f(b,g)} is continuous and satisfies
    \begin{equation}\label{eq:encadrement-F}
        \pi g^2 \le F(b, g) \le \pi g^2 + \frac{b^2}{16 \pi},
    \end{equation}
with equality in the left for $\frac{2\pi g}{b}\in\N$, and equality in the right for $\frac{2\pi g}{b}\in\N+\frac12$, and $F(b,g)\to \pi g^2$ as $b\to 0$.    For any $b\geq 0$,  the map $g \mapsto F(b, g) - \pi g^2$ is $\frac{b}{2 \pi}$ periodic and the map $g \in \RR_+\mapsto F(b, g)$ is piece-wise linear, increasing and convex. Finally, for all $0\leq g < \frac{b}{2 \pi}$, we have $F(b, g) = \frac12 b g$ 
\end{proposition}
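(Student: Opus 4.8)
The plan is to reduce every claim to the substitution $x:=\tfrac{2\pi g}{b}$ (we assume $b>0$ throughout; for $b=0$ the function is $\pi g^2$ by convention and there is nothing to prove) together with the elementary identity already isolated in the proof of Theorem~\ref{th:reduction}. Writing $n:=\lfloor x\rfloor\in\N_0$ and $t:=\{x\}=x-n\in[0,1)$, the expansion $(n+t)^2+t(1-t)=n^2+(2n+1)t$ combined with $\pi g^2=\tfrac{b^2}{4\pi}x^2$ gives
\[
F(b,g)=\frac{b^2}{4\pi}\bigl(x^2+t(1-t)\bigr)=\frac{b^2}{4\pi}\bigl(n^2+(2n+1)\,t\bigr).
\]
Every property will then be read off this formula, interval by interval, on $I_n:=\bigl[\tfrac{bn}{2\pi},\tfrac{b(n+1)}{2\pi}\bigr)$, $n\in\N_0$, the set on which $n=\lfloor x\rfloor$ is constant and $t=\tfrac{2\pi g}{b}-n$ depends affinely on $g$.

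First, the bounds and the limit $b\to0$. Since $t(1-t)\in[0,\tfrac14]$ for $t\in[0,1)$, attaining $0$ exactly at $t=0$ and $\tfrac14$ exactly at $t=\tfrac12$, one gets immediately $0\le F(b,g)-\pi g^2=\tfrac{b^2}{4\pi}t(1-t)\le\tfrac{b^2}{16\pi}$, which is~\eqref{eq:encadrement-F}, with equality on the left iff $t=0$, i.e.\ $\tfrac{2\pi g}{b}\in\N$, and on the right iff $t=\tfrac12$, i.e.\ $\tfrac{2\pi g}{b}\in\N+\tfrac12$; the same sandwiching gives $\lvert F(b,g)-\pi g^2\rvert\le\tfrac{b^2}{16\pi}\to0$ as $b\to0$. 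For continuity, note that $x\mapsto\{x\}(1-\{x\})$ is continuous on $\R$ — it tends to $0$ as $x\to n^\pm$ for every $n\in\Z$, so the jump of $\{\cdot\}$ is cancelled — hence $g\mapsto\tfrac{b^2}{4\pi}\{\tfrac{2\pi g}{b}\}(1-\{\tfrac{2\pi g}{b}\})$ and therefore $F$ is continuous. For periodicity, replacing $g$ by $g+\tfrac{b}{2\pi}$ sends $x$ to $x+1$ and leaves $\{x\}$ unchanged, so $g\mapsto F(b,g)-\pi g^2$ is $\tfrac{b}{2\pi}$-periodic.

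Next, the piecewise-affine structure on $\R_+$. Substituting $t=\tfrac{2\pi g}{b}-n$ into the reduced formula for $F$ and simplifying $n^2-n(2n+1)=-n(n+1)$ yields, for $g\in I_n$,
\[
F(b,g)=-\frac{b^2\,n(n+1)}{4\pi}+\frac{b(2n+1)}{2}\,g,
\]
so $F(b,\cdot)$ is affine on each $I_n$ with slope $s_n:=\tfrac{b(2n+1)}{2}$, and these slopes are strictly positive and strictly increasing in $n$. Together with continuity this shows that $F(b,\cdot)$ is piecewise linear, strictly increasing, and convex (a continuous piecewise-affine function with nondecreasing slopes is convex). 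Finally, for $0\le g<\tfrac{b}{2\pi}$ we are on $I_0$, where $n=0$ and the displayed formula reduces to $F(b,g)=\tfrac{b}{2}g$.

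All the steps are elementary; the only points that call for a little care are checking that the affine pieces agree at the breakpoints $g=\tfrac{bn}{2\pi}$ — equivalently, the continuity of $x\mapsto\{x\}(1-\{x\})$ at integers — and invoking the standard fact that a continuous piecewise-linear function with nondecreasing slopes is convex. I do not expect any genuine obstacle here.
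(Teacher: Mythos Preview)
Your proof is correct and follows essentially the same approach as the paper: the paper also introduces $x:=\tfrac{2\pi g}{b}$ and rewrites $F$ in the piecewise-affine form $F(b,g)=\tfrac{b^2}{4\pi}\bigl(x(1+2\lfloor x\rfloor)-\lfloor x\rfloor-\lfloor x\rfloor^2\bigr)$, which is exactly your $\tfrac{b^2}{4\pi}\bigl(n^2+(2n+1)t\bigr)$, and then declares the remaining checks straightforward. You simply supply those details explicitly.
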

\begin{proof}
    The first part is straightforward from the definition~\eqref{eq f(b,g)}. To see that it is convex, piece-wise linear and increasing, we use the alternative form 
    \begin{equation} \label{eq: F(b,g) alternate form}
        F(b, g) = \frac{b^2}{4 \pi} \left( x(1 + 2 \lfloor x \rfloor) -  \lfloor x \rfloor - \lfloor x \rfloor^2 \right),
    \end{equation}
where we have denoted by $x := \frac{2 \pi g}{b}$. 
    When  $0\leq g < \frac{b}{2 \pi}$, which corresponds to  $0 \leq  x < 1$, $F(b, g) = \frac12 b g$. 
\end{proof}

\begin{remark}\label{rmk F(b,G) vs pi Tr(G^2)}
	The left inequality of~\eqref{eq:encadrement-F} implies $\Tr(F(b,G)) \ge \pi \Tr(G^2)$, hence, together with~\eqref{eq:equat_intro_1}, that
    \[
        \inf_{\gamma \in \cP^\bA \atop \rho_\gamma = \rho}  \left\{ \frac12 \VTr \left( \bL^\bA_3 \gamma   \right) \right\} \ge \inf_{\gamma \in \cP \atop \rho_\gamma = \rho} \left\{ \frac12 \VTr \left( -\Delta_3 \gamma   \right)\right\}.
    \] 
    In particular, the kinetic energy is higher with the magnetic field. This is a kind of diamagnetic inequality for 2D materials.
\end{remark}

\begin{remark}
     The fact that $F(b,g)\to\pi g^2$ as $b\to0$, for all $g\in\R^+$, means that, our reduction approach in this manuscript is coherent with the one without magnetic field already treated in \cite{GLM21}.
\end{remark}

\begin{remark}
     Splitting $F$ into $F(b,G)=\pi g^2 + \tilde{F}(b,g)$, we see that the effect of adding a magnetic field $\bB=(0,0,b)$ is a periodic perturbation of the energy with no magnetic field.
\end{remark}


\subsection{Reduced DFT models}
\label{ssec:rHF}

In the context of DFT, the previous result suggests modelling the electronic state in a homogeneous slab of charge distribution $\mu (\bx) = \mu(x_3)$ under a constant magnetic field $\bB = b(0, 0, x_3)$ by a reduced state $G\in \cG$ whose energy per unit surface is given by 
\begin{equation}\label{eq DFT model}
\cE(G)= \frac12\Tr(-\Delta_1 G)+\Tr\left( F_b( G) \right) + \frac12 \cD_1(\rho_G -\mu) + E^{\rm xc}(\rho_G). 
\end{equation}
Here, $E^{\rm xc}$ models is an exchange-correlation energy {\em per unit surface}, and $\cD_1$ is the one--dimensional Hartree term. This last term has been extensively studied in our previous work~\cite[Section 3.1]{GLM21}, and is defined as follows. For $f \in \cC := \left\{  f \in L^1(\R), \ W_f \in L^2(\R) \right\}$, where $W_f(x) := \int_{-\infty}^x f$ is a primitive of $f$, we have
\[
    \cD_1(f) := 4 \pi \int_{\R} | W_f |^2(x) \rd x.
\]
We have proved in~\cite[Proposition 3.3]{GLM21} that the elements $f \in \cC$ have null integral $\int f = 0$, that the map $\cC \ni f \mapsto \cD_1(f)$ is convex, and that
\[
    \cD_1(f) = 4 \pi \iint_{(\R_+)^2 \times (\R_-)^2} \min \{ | x |, | y | \} f(x) f(y) \rd x \rd y = \int_{\R} \Phi_f(x) f(x) \rd x,
\]
with the mean-field potential
\[
    \Phi_f(x) := 4 \pi \int_{\R^\pm} \min \{ | x |, |y | \} f(y)  \rd y, \quad x \in \R^\pm.
\]
The function $\Phi_f$ is continuous, and is the (unique) solution to 
\[
    - \Phi_f''(x) = 4 \pi f , \quad \Phi_f'(x) \xrightarrow[x \to \pm \infty]{} 0, \quad \Phi_f(0) = 0.
\]
In practice, we restrict the minimization problem to the $G$ so that $\rho_G - \mu \in \cC$. This implies in particular the neutrality condition $\Tr(G) = \int \rho_G = \int \mu = \nu$. On the other hand, if $G$ is a trace class operator, then $\rho_G \in L^1$ (we assume that $\mu \in L^1$ as well), and if $\rho_G - \mu$ has null integral, then $\cD_1(\rho_G - \mu) < \infty$ iff $\rho_G - \mu \in \cC$. 

Note that there is no Pauli principle on $G$ for admissible states $\cG$. It has been replaced by a penalization term $+ F(b, G)$ in the energy.

\begin{remark}
    The energy~\eqref{eq DFT model} is obtained when minimizing a 3-dimensional DFT model over transitionally invariant states. In particular, this model does not include possible spatial symmetry breaking along the first $2$ variables. Such phenomena are known to exist in two-dimensional electron gas under magnetic field due to the de Haas--van Alphen effect~\cite{de1930dependence}. In some real-life systems {\em e.g.} Br$_2$, magnetic domains form, sometimes called Condon domains~\cite{condon1966nonlinear, markiewicz1985giant}. Our simple model is unable to capture these effects.
\end{remark}

\subsection{The reduced Hartree--Fock case}
Let us illustrate the previous discussion in the particular case of the reduced Hartree-Fock (rHF) model, in which $E^{\rm xc} = 0$ in~\eqref{eq DFT model}. We let $0\le\mu \in L^1(\R)$ be a nuclear density describing a homogeneous 2D material and denote by $\nu=\int_\R \mu$ the total charge per unit surface.

We denote by 
\begin{equation}
    \cE_b^\rHF(G) := \frac12\Tr(-\Delta_1 G) +\Tr(F(b,G)) +\frac12 \cD_1(\rho_G -\mu) 
\end{equation}
the corresponding rHF energy per unit surface, and study the minimization problem
\begin{equation*}
    \cI_b^\rHF :=\inf\set{\cE_b^\rHF(G),\;  G\in\cG^\nu}, \quad \text{with} \quad
    \cG^\nu := \left\{ G \in \cG, \ \Tr(G) = \nu \right\}.
\end{equation*} 

Following the exact same lines as~\cite[Theorem 2.7]{GLM21}, one has the following. 
\begin{theorem}
    The problem $\cI_b^\rHF$ admits a minimizer, and all minimizers share the same density.
\end{theorem}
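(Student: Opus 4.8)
The plan is to run the direct method of the calculus of variations, following the proof of~\cite[Theorem 2.7]{GLM21} almost line for line; the only structural change is that the penalisation $\pi\Tr(G^2)$ there is replaced by $\Tr(F(b,G))$, and Proposition~\ref{prop:props-F} ensures that this substitution costs nothing. First I would check that $\cI_b^\rHF$ is finite: all three contributions to $\cE_b^\rHF$ are nonnegative — since $-\Delta_1\ge0$ and $G\ge0$ one has $\Tr(-\Delta_1 G)\ge0$, while $F(b,\cdot)\ge0$ by~\eqref{eq:encadrement-F} and $\cD_1\ge0$ — so $\cI_b^\rHF\ge0$, and any $G\in\cG^\nu$ built from a smooth density of total mass $\nu$ supported near $\operatorname{supp}\mu$ has $\cE_b^\rHF(G)<\infty$.

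Next I would take a minimising sequence $(G_k)\subset\cG^\nu$ with $\cE_b^\rHF(G_k)\le\cI_b^\rHF+1$ and extract coercivity: nonnegativity of the three terms and the bound $F(b,g)\ge\pi g^2$ of~\eqref{eq:encadrement-F} give $\Tr(-\Delta_1 G_k)\le C$, $\Tr(G_k^2)\le C$ and $\cD_1(\rho_{G_k}-\mu)\le C$, together with $\Tr(G_k)=\nu$. Hence $(1-\Delta_1)^{1/2}G_k(1-\Delta_1)^{1/2}$ is bounded in the trace class, so up to a subsequence $G_k\rightharpoonup G$ in the relevant weak-$*$ topology, with $G\ge0$, $\Tr(G)\le\nu$, $\Tr(-\Delta_1 G)<\infty$, and $\rho_{G_k}\to\rho_G$ in $L^1_{\loc}(\R)$. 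The key compactness point is that no mass escapes to infinity: exactly as in~\cite{GLM21}, the bound $\cD_1(\rho_{G_k}-\mu)=4\pi\|W_{\rho_{G_k}-\mu}\|_{L^2}^2\le C$ forces the cumulative distribution $W_{\rho_{G_k}}$ to stay $L^2$-close to $W_\mu$, and since $W_{\rho_{G_k}-\mu}$ is monotone outside $\operatorname{supp}\mu$ an $L^2$ plateau of height $m>0$ would require length $\gtrsim m^{-2}$, which is excluded; thus $(\rho_{G_k})$ is tight and $\Tr(G)=\lim_k\Tr(G_k)=\nu$, i.e.\ $G\in\cG^\nu$.

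I would then pass to the limit inferior term by term. The kinetic term $\tfrac12\Tr(-\Delta_1 G)$ is weakly lower semicontinuous by Fatou applied to $(1-\Delta_1)^{1/2}G_k(1-\Delta_1)^{1/2}$; the Hartree term $\tfrac12\cD_1(\rho_G-\mu)$ is convex and continuous in $\rho_G$, hence weakly lsc along $\rho_{G_k}\to\rho_G$; and the penalisation $G\mapsto\Tr(F(b,G))$ is convex (Klein's inequality, as $F(b,\cdot)$ is convex by Proposition~\ref{prop:props-F}), which upgrades to weak lsc exactly as for the $\pi\Tr(G^2)$ term in~\cite{GLM21} once one uses $\Tr(G)=\nu$, the compactness of $F'(b,G)-\tfrac b2\,\bbI$ (because $F'(b,\cdot)=\tfrac b2$ near $0$), and the bound $F(b,g)\le\pi g^2+\tfrac b2 g$ from Proposition~\ref{prop:props-F} to ensure the eigenvalue sum converges. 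Hence $\cE_b^\rHF(G)\le\liminf_k\cE_b^\rHF(G_k)=\cI_b^\rHF$, so $G$ is a minimiser. For uniqueness of the density I would use convexity: $\cE_b^\rHF$ is convex on $\cG^\nu$ (the kinetic term is linear in $G$, $\Tr(F(b,G))$ is convex in $G$, and $G\mapsto\cD_1(\rho_G-\mu)$ is convex because $G\mapsto\rho_G$ is linear and $\cD_1$ is convex), and $\rho\mapsto\cD_1(\rho-\mu)=4\pi\|W_{\rho-\mu}\|_{L^2}^2$ is \emph{strictly} convex because $\rho\mapsto W_{\rho-\mu}$ is affine and injective ($\rho$ is recovered by differentiating $W_{\rho-\mu}$) and $\|\cdot\|_{L^2}^2$ is strictly convex. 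So two minimisers $G_1,G_2$ with $\rho_{G_1}\neq\rho_{G_2}$ would make $\tfrac12(G_1+G_2)\in\cG^\nu$ strictly lower the energy, a contradiction.

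I expect the genuinely delicate step to be the absence of loss of mass along minimising sequences — the confining effect of the one-dimensional Coulomb term $\cD_1$ — which is the crux of~\cite[Theorem 2.7]{GLM21} and is inherited here essentially unchanged. By contrast, the new ingredient, that $\Tr(F(b,G))$ is coercive and weakly lower semicontinuous, is routine: it follows from the convexity of $F$ and the two-sided bound $\pi g^2\le F(b,g)\le\pi g^2+\tfrac{b^2}{16\pi}$ recorded in Proposition~\ref{prop:props-F}.
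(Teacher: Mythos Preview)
Your proposal is correct and follows exactly the approach the paper intends: the paper itself skips the proof entirely, writing ``Following the exact same lines as~\cite[Theorem 2.7]{GLM21}'' and noting only that strict convexity of $\rho\mapsto\cD_1(\rho-\mu)$ gives uniqueness of the density. You have correctly identified that the only adaptation needed is to replace $\pi\Tr(G^2)$ by $\Tr(F(b,G))$, and that the two-sided bound of Proposition~\ref{prop:props-F} makes this harmless for coercivity and lower semicontinuity.
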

We skip the proof for brevity. The uniqueness of the density comes from the fact that the problem is strictly convex in $\rho_G$. However, unlike the case without magnetic field,  $F(b,\cdot)$ is not strictly convex for $b > 0$. It is unclear to us whether the minimizer of $\cI_b^\rHF$ is unique.

\medskip

We would like to write the Euler-Lagrange equations for a minimizer $G_*$. Recall that $g \mapsto F(b, g)$ is continuous and convex (but is not smooth). We denote by $f_b := \partial_g F(b,\cdot)$ its subdifferential, a set-valued function defined by
\[
    f_b(g) := \left\{ a \in \R, \quad \forall g' \in \R, \quad F(b,g') - F(b,g) \ge a (g' - g) \right\}.
\]
In our case, since the function $F_b$ is piece-wise linear, $f_b$ is explicit. From~\eqref{eq: F(b,g) alternate form}, we obtain (in the following lines, $\{ a \}$ denotes the singleton $a$)
\[
    f_b(g) = \begin{cases}
        \left\{\frac{b^2}{4 \pi} (2n + 1)\right\}  \quad \text{if } \exists n \in \N_0,\;  \quad n < \frac{2 \pi g}{b} < n+1 \\
        \left[ \frac{b^2}{4 \pi} (2n - 1), \frac{b^2}{4 \pi} (2n + 1)  \right] \quad \text{if} \quad \frac{2 \pi g}{b} = n \in \N_0.
    \end{cases}
\]

Its inverse map, noted $h_b$, is the set-valued function so that $y \in f_b(x)$ iff $x \in h_b(y)$. One finds, for $y > 0$,
\[
    h_b(y) =  \begin{cases}
        \left[ n \frac{b}{2 \pi}, (n+1)  \frac{b}{2 \pi} \right] \quad \text{if} \quad n := \frac12 \left( \frac{4 \pi}{b^2} y - 1  \right) \in \N_0 , \\
        \left\{ n \frac{b}{2 \pi } \right\} \quad \text{if there is $n \in \N_0$ so that} \quad n -1<  \frac12 \left( \frac{4 \pi}{b^2} y - 1 \right)< n.
    \end{cases}
\]
We extend the definition of  $h_b$  by setting $h_b(y) = 0$ for $y < 0$.

In order to work with functions, it is useful to introduce the maps 
\begin{equation} \label{eq:def:fpm}
    f_b^\pm(g) := \lim_{t \to 0^\pm} \frac{1}{t}\left( F(b, g + t) - F(b, g) \right)
\end{equation}
so that $f_b(g) = [f_b^-(g), f_b^+(g)]$ for all $g \in \R^+$. Of course, if $g \notin \frac{b}{2 \pi} \N_0$ is a regular point, then $f_b(y) = f_b^+(y) = f_b^-(y)$. We define the maps $h_b^\pm$ similarly, so that $h_b(y) = [h_b^-(y), h_b^+(y)]$ for all $y \in \R$.

\medskip

The Euler--Lagrange equations for $G_*$ takes the following form (see end of the section for the proof).
    \begin{proposition}[Euler-Lagrange equations] \label{lem:ELeqt}
        Let $G_*$ be a minimizer of $\cI^{\rHF}_b$. Then there is $\lambda \in \R$ so that
        \begin{equation}\label{eq Euler Lagrange}
            \begin{cases}
                h_b^-(\lambda - H_*)  \le G_* \le h_b^+ (\lambda - H_*) \\
                H_*  := - \frac12 \Delta_1 + \Phi_* \\
                -\Phi_*''  = 4 \pi( \rho_* - \mu ), \quad \Phi_*'(x) \xrightarrow[x \to \pm \infty]{} 0, \quad \Phi_*(0) = 0,
            \end{cases}
        \end{equation}
    where $\rho_* = \rho_{G_*}$ is the associated density of $G_*$,  and $\Phi_*$ is the mean-field potential, defined as the unique solution of the last equation.
    \end{proposition}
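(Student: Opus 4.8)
The plan is to follow the convex-duality argument used for the non-magnetic reduced Hartree--Fock model in~\cite[Theorem~2.7]{GLM21}, with extra care devoted to the non-smoothness of the penalization. First I would record that $\cE_b^\rHF$ is convex on $\cG$: the kinetic term $\tfrac12\Tr(-\Delta_1 G)$ is linear in $G$; the Hartree term $\tfrac12\cD_1(\rho_G-\mu)$ is a non-negative quadratic form precomposed with the linear map $G\mapsto\rho_G$; and $G\mapsto\Tr(F(b,G))$ is convex because $g\mapsto F(b,g)$ is convex by Proposition~\ref{prop:props-F}. Since the constraint $\Tr(G)=\nu$ is affine, a standard Lagrange-multiplier argument for a single linear constraint — carried out, exactly as in~\cite[Section~2]{GLM21}, by perturbing $G_*$ along trace-preserving directions and using the one-sided directional derivatives that exist by convexity — produces $\lambda\in\R$ such that $G_*$ minimizes the grand-canonical functional $G\mapsto\cE_b^\rHF(G)-\lambda\Tr(G)$ over the whole of $\cG$.

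Next I would freeze the mean-field potential. Since minimizers satisfy $\rho_*-\mu\in\cC$, the mean-field potential $\Phi_*:=\Phi_{\rho_*-\mu}$ is well-defined, continuous, and solves $-\Phi_*''=4\pi(\rho_*-\mu)$ with the boundary conditions in~\eqref{eq Euler Lagrange}. Because $\cD_1$ is a non-negative symmetric bilinear form, for every $G\in\cG$,
\[
    \tfrac12\cD_1(\rho_G-\mu)\ \ge\ \tfrac12\cD_1(\rho_*-\mu)+\int_\R\Phi_*(\rho_G-\rho_*)\,\rd x
    \ =\ \tfrac12\cD_1(\rho_*-\mu)+\Tr\big(\Phi_*(G-G_*)\big),
\]
with equality at $G=G_*$. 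Inserting this into the minimality from the previous step, I would conclude that $G_*$ is also a minimizer, over all of $\cG$, of the linearized functional
\[
    \widetilde{\cE}(G)\ :=\ \Tr\big((H_*-\lambda)G\big)+\Tr(F(b,G)),\qquad H_*:=-\tfrac12\Delta_1+\Phi_*,
\]
where $H_*$ is a well-defined self-adjoint operator bounded below, exactly as in the non-magnetic case of~\cite{GLM21}.

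It then remains to characterize the non-negative minimizers $G_*$ of $\Tr(AG)+\Tr(F(b,G))$ for the fixed self-adjoint operator $A:=H_*-\lambda$. Since $\Tr(F(b,\cdot))$ depends only on the spectrum, testing $G_*$ against the unitary conjugations $G_*\mapsto \re^{\ri\theta K}G_*\re^{-\ri\theta K}$ (with $K$ bounded self-adjoint) shows that $\theta=0$ minimizes $\theta\mapsto\Tr(A\,\re^{\ri\theta K}G_*\re^{-\ri\theta K})$, hence $\Tr(K[G_*,A])=0$ for all such $K$, i.e.\ $[H_*,G_*]=0$. On each finite-dimensional eigenspace of $G_*$ associated with a nonzero eigenvalue I would then diagonalize $H_*$ simultaneously, so that $\widetilde\cE$ decouples into scalar problems $\min_{g\ge0}\{(e-\lambda)g+F(b,g)\}$ with $e$ an eigenvalue of $H_*$; by convexity of $F(b,\cdot)$, the optimal set of such a problem is exactly $h_b(\lambda-e)$, the inverse subdifferential. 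On $\ker G_*$ (where $H_*$ may have continuous spectrum) a one-sided perturbation $G_*\mapsto G_*+t\,|\psi\rangle\langle\psi|$, $t\downarrow0$, yields the compatible one-sided inequality. Expressing the resulting relation $G_*\in h_b(\lambda-H_*)$ through the monotone selections $h_b^\pm$ introduced above gives $h_b^-(\lambda-H_*)\le G_*\le h_b^+(\lambda-H_*)$, which, together with the definitions of $H_*$ and $\Phi_*$, is exactly~\eqref{eq Euler Lagrange}.

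The main obstacle is the non-smoothness of $g\mapsto F(b,g)$, which makes $\Tr(F(b,\cdot))$ only subdifferentiable and forces one to work with the one-sided maps $f_b^\pm$ and $h_b^\pm$ throughout: the directional derivatives entering the Lagrange-multiplier derivation are merely one-sided, and the scalar minimization problems $\min_{g\ge 0}\{ag+F(b,g)\}$ have, at the breakpoints of $F(b,\cdot)$, a whole interval of optimizers — which is precisely what the two-sided operator bound $h_b^-(\lambda-H_*)\le G_*\le h_b^+(\lambda-H_*)$ encodes. The remaining, purely functional-analytic ingredients — that $\Phi_*$ and $H_*$ are well-defined with the stated regularity, that trace-class perturbations of $G_*$ are admissible, and that the Lagrange multiplier exists — are identical to the non-magnetic case and would be imported from~\cite[Section~2]{GLM21}.
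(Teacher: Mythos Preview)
Your proposal is correct and rests on the same two technical pillars as the paper's proof: the commutation $[H_*,G_*]=0$ via unitary conjugations $\re^{\ri\theta K}G_*\re^{-\ri\theta K}$ (which preserve both the trace constraint and $\Tr(F(b,\cdot))$), and rank-one perturbations to pin down the eigenvalue relations and handle $\Ker G_*$. The organization differs. You first extract $\lambda$ by a Lagrange/convex-duality step, linearize the Hartree term, and reduce to the unconstrained minimization of $\widetilde\cE(G)=\Tr((H_*-\lambda)G)+\Tr(F(b,G))$; the paper instead stays in the constrained problem throughout, uses trace-preserving perturbations $G_*+t(|\psi_i\rangle\langle\psi_i|-|\psi_j\rangle\langle\psi_j|)$ directly in $\cE_b^\rHF$, obtains the interlacing inequalities $f_b^+(g_i)+\varepsilon_i\ge f_b^-(g_j)+\varepsilon_j$, and only then \emph{defines} $\lambda:=\sup_j(f_b^-(g_j)+\varepsilon_j)$. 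Your grand-canonical reduction makes the scalar minimization and the kernel case slightly cleaner (a single one-sided perturbation $G_*+t|\psi\rangle\langle\psi|$ suffices, without pairing against some $\psi_j$); the paper's direct route avoids having to justify the passage to the grand-canonical ensemble. Note, incidentally, that obtaining $\lambda$ ``by perturbing $G_*$ along trace-preserving directions'' is exactly the paper's interlacing computation, so at that step the two arguments coincide.
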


The first equation can also be written as
 \[
    G_* \in h_b(\lambda - H_*),
 \]
 and means that if $G_* = \sum_j g_j | \psi_j \rangle \langle \psi_j |$ is the spectral decomposition of the optimizer, then $\psi_j$ is also an eigenfunction of $H_*$ for an eigenvalue $\varepsilon_j$ so that
\[
    g_j \in h_b(\lambda - \varepsilon_j), \quad \text{or equivalently} \quad \varepsilon_j \in \lambda - f_g(g_j).
\]
Conversely, if $\varepsilon < \lambda$ is an eigenvalue of $H_*$, then $\varepsilon = \varepsilon_j$ for some $j$.

\medskip

In practice, for numerical purpose, one rather considers an approximation $F^\delta_b$ of $F$, which is smooth, strictly convex, and so that $\| F^\delta_n - F \|_\infty < \delta$. In this case, one can repeat the arguments in~\cite[Theorem 2.7]{GLM21}, and the first line of~\eqref{eq Euler Lagrange} becomes
\[
    \left(F_b^\delta\right)'(G_*) = \lambda- H_*, \quad \text{or, equivalently} \quad G_* = \left[ \left(F_b^\delta\right)'  \right]^{-1} (\lambda- H_*).
\]

\begin{remark}[Strong magnetic fields]
    In the case where $b > 2 \pi \nu$, any $G \in \cG^\nu$are positive and satisfies $\Tr(G) = \nu$, hence all eigenvalues of $G$ are smaller than $\nu$. In particular, 
    \[
        F_b(G) = \frac12 b G, \quad \text{hence} \quad \Tr \left( F_b (G) \right) = \frac12 b \nu
    \]
    is a constant, independent of $G \in \cG^\nu$. In this case, $G_*$ is also the minimizer of
    \[
        \inf \left\{ \frac12 \Tr( - \Delta_1 G) + \frac12 \cD_1(\rho_G - \mu ), \quad G \in \cG^\nu \right\}.
    \]
    This minimizer is therefore independent of $b > 2 \pi \nu$, reflecting the fact that all electrons lie in the lowest Landau level. Following the previous lines, we deduce that $G_*$ is a rank-$1$ operator, of the form $G_* = \nu | \psi_* \rangle \langle \psi_*|$, with $\psi_*$ minimizing
    \[
        \inf \left\{ \frac \nu2 \int_{\R} | \nabla \psi_* |^2 + \frac12 \cD_1\left( \nu | \psi_* |^2 - \mu \right), \quad \psi_* \in L^2(\R), \ \| \psi_* \| = 1 \right\}.
    \]
\end{remark}

\begin{proof}[Proof of Proposition~\ref{lem:ELeqt}]
  Let $G_*$ be a minimizer of $\cI^\rHF_b$, and let $\rho_*$ and $\Phi_*$ be the corresponding density and mean-field potential, and set $H_* := - \frac12 \Delta_1 + \Phi_*$. Recall that $\rho_*$ (hence $\Phi_*$ and $H_*$) is uniquely defined. 
    
    First, we claim that $G_*$ commutes with $H_*$. This is a standard result in the case where the map $F$ is smooth (say of class $C^1$), using that 
    \[
        \Tr ( F (G_* + H)) = \Tr(F(G_*)) + \Tr(F'(G_*) H ) + o(H).
    \]
    In our case however, the map $F$ is only piece-wise smooth, and we need a direct proof. Let $A$ be a finite-rank symmetric operator on $L^2(\R)$, and set $G_{t, A} := \re^{ - \ri t A} G_* \re^{ \ri t A}$. Since $G_{t, A}$ is a unitary transformation of $G_*$, we have $\Tr (G_{t, A}) = \Tr(G_*) = \nu$, that is $G \in \cG^\nu$ and $\Tr( F(b, G_{t, A})) = \Tr ( F(b, G_*))$. In particular, 
    \begin{align*}
        \cE^\rHF_b(G_{t, A}) & = \cE^\rHF_b(G_*) \\
        & + \frac12 \left( \Tr(- \Delta_1 [G_{t, A} - G_*]) +  \cD_1 \left( \rho_{G_{t, A}} - \mu\right) -  \cD_1 \left( \rho_{*} - \mu\right) \right).
    \end{align*}
    Together with the fact that
    \[
        G_{t, A} = G_* + \ri t \left[ G_*, A \right] + o(t), 
    \]
    and the definition of $H_*$, we deduce that
    \[
        \cE^\rHF_b(G_{t, A})  = \cE^\rHF_b(G_*)  + \ri t \Tr \left( H_*   [G_*, A]   \right) + o(t).
    \]
    Since the minimum of $\cE^\rHF_b$ is obtained for $t = 0$, the linear term in $t$ must vanish, that is:
    \begin{align*}
        0 & = \Tr \left( H_*  [G_*, A] \right) = \Tr \left( H_* G_* A - H_* A G_* \right) \\
         & = \Tr \left( H_* G_* A - G_* H_* A  \right) = \Tr \left( [H_*, G_*] A \right),
    \end{align*}
    where we have used cyclicity of the trace, and the fact that $A$ is finite rank (so all operators are trace-class). Since this is true for all finite rank symmetric operators $A$, we deduce as wanted that $[H_*, G_*] = 0$.
    
    \medskip
    
    Recall that $G_*$ is positive compact (even trace class). So there is $M := {\rm rank}(G_*) \in \N \cup \{ \infty \}$ and an orthonormal family $\{ \psi_j \}_{j \in [1, M]}$ so that
    \[
        G_* = \sum_{j = 1}^M g_j | \psi_j \rangle \langle \psi_j |, \quad \text{with} \quad H \psi_j = \varepsilon_j \psi_j,
    \]
    and with $g_1 \ge g_2 \ge \cdots \ge g_M > 0$. The orthonormal family $\{ \psi_j\}_{1 \le j \le M}$ spans $\Ran(G_*)$. For two indices $(i,j) \in [1, M]$, we consider the operator
    \[
        G_t^{(i,j)} := G_* + t \left( | \psi_i \rangle \langle \psi_i | - | \psi_j \rangle \langle \psi_j | \right).
    \]
    For $t$ small enough ($| t | < \min \{ g_i, g_j \}$), the operator $G_t^{(i,j)}$ is positive, and with $\Tr(G_t^{(i,j)}) = \Tr(G_*) = \nu$, hence $G_t^{(i,j)} \in \cG^\nu$. In addition, we have
    \begin{align*}
        \cE^\rHF_b(G_t^{(i,j)})  -  \cE^\rHF_b(G_*) =  &  \left( F(b, g_i + t)  - F(b, g_i) \right) +  \left( F(b, g_j - t)  - F(b, g_j) \right) \\
        & + t \underbrace{\Tr (H_* ( | \psi_i \rangle \langle \psi_i | - | \psi_j \rangle \langle \psi_j | ))}_{ = \varepsilon_i - \varepsilon_j} + o(t).
    \end{align*}
    By optimality of $G_*$, this quantity is always positive.
    Taking the limit $t \to 0^+$, and recalling the definition of $f_b^\pm$ in~\eqref{eq:def:fpm}, we deduce that
    \[
        f_b^+(g_i) +  \varepsilon_i \ge f_b^-(g_j) +  \varepsilon_j.
    \]
    This inequality is valid for all $1 \le i,j \le M$, so
    \[
        \inf_{1 \le i \le M}  \left(  f_b^+(g_i) +  \varepsilon_i  \right) \ge \sup_{1 \le j \le M} \left( f_b^-(g_j) +  \varepsilon_j \right) =: \lambda
    \]
     For all $1 \le j \le M$, we have
    \[
        f_b^-(g_j) +  \varepsilon_j \le \lambda \le f_b^+(g_j) +  \varepsilon_j, \quad \text{hence} \quad \lambda - \varepsilon_j \in [f_b^-(g_j), f_b^+(g_j)].
    \]
    This is also $\lambda - \varepsilon_j \in f_b(g_j)$, or equivalently $g_j \in h_b(\lambda - \varepsilon_j)$. Note that if there is an eigenvalue $g_j \notin \frac{b}{2 \pi} \Z$, then $f_b^-(g_j) = f_b^+(g_j)$ for this eigenvalue, and there is equality. In other words, we have proved that 
    \[
        G_* \in h_b (\lambda - H_*) \quad \text{on} \quad \Ran(G_*).
    \]
    It remains to prove the result on $\Ker(G_*)$. Let $\psi \in \Ker(G_*)$. This time, we consider the perturbed state
    \[
        G_t^{(j)} := G_* + t \left( | \psi \rangle \langle \psi | - | \psi_j \rangle \langle \psi_j | \right),
    \]
    which is in $\cG^\nu$ for all $0 \le t \le g_j$. Taking the limit $t \to 0^+$, and reasoning as before, we get
    \[
        f_b^+(0) + \langle \psi, H_* \psi \rangle \ge f_b^-(g_j) + \varepsilon_j, 
        \quad \text{hence} \quad
         f_b^+(0) + \langle \psi, H_* \psi \rangle \ge \lambda,
    \]
    where we took the supremum in $1 \le j \le M$ in the last inequality. We deduce that, for all $\psi \in \Ker(G_*)$, 
    \[
        f_b^+( \langle \psi, G_* \psi \rangle )  = f_b^+(0) \ge \lambda - \langle \psi, H_* \psi \rangle, 
    \]
    so 
    \[
        f_b^+(G_*)  \ge \lambda - H_*, \quad \text{on} \quad \Ker(G_*).
    \]
    Together with the fact that $f_b^-(G_*) = 0$ on $\Ker(G_*)$, we obtain $G_* \in h_b(\lambda - H_*)$ on $\Ker(G_*)$ as well.
\end{proof}




\section{Models with spin}
\label{sec:spins}

In this section, we explain how to extend our results to the case where the spin is taken into account. In this case, the density matrix is an operator $\gamma \in\cS(L^2(\R^3,\C^2))$ satisfying the Pauli-principle $0 \le \gamma \le 1$. Such an operator can be  decomposed as a  $2\times 2$ matrix of the form
\[
{\gamma}=\begin{pmatrix}
    \gamma^{\uparrow\uparrow} & \gamma^{\uparrow\downarrow} \\
    \gamma^{\downarrow\uparrow} & \gamma^{\downarrow\downarrow}
\end{pmatrix}.
\]
The spin--density $2 \times 2$ matrix of $\gamma$ is $R_\gamma(\bx) = \gamma(\bx, \bx)$ (see~\cite{Gontier2013} for details), and the total density is $\rho_\gamma(\bx) = R_\gamma^{\uparrow \uparrow}(\bx) + R_\gamma^{\downarrow \downarrow}(\bx) =  \gamma^{\uparrow \uparrow}(\bx, \bx) + \gamma^{\downarrow \downarrow}(\bx, \bx)$.

The kinetic energy operator is now the Pauli operator
$$
\bP^\bA_3 := \left[ \sigma\cdot \bra{-\ri \nabla + \bA } \right]^2,
$$
where $\sigma$ contains the Pauli matrices. In the case of constant magnetic field $\bB = (0, 0, b)$ with the gauge $\bA = b (0, x_1, 0)$, this operator becomes
\[
\bP^\bA_3 = \bL^\bA_3 \bbI_2 + \begin{pmatrix}
    b & 0 \\ 0 & -b
\end{pmatrix} =  \bL^\bA_2 \bbI_2 + \begin{pmatrix}
b & 0 \\ 0 & -b
\end{pmatrix} - \partial_{x_3 x_3}^2 \bbI_2.
\]
In what follows, we denote by $\cB := \begin{pmatrix}
    b & 0 \\ 0 & -b
\end{pmatrix}$. This term corresponds to the Zeeman term. There are several ways to read this operator. Indeed, we have
\begin{align}
	L^2(\R^3, \C^2) & \simeq L^2(\R^3) \otimes \C^2 \simeq L^2(\R^2) \otimes L^2(\R) \otimes \C^2 \nonumber \\
	& \simeq L^2(\R^2)  \otimes L^2(\R, \C^2) \label{eq:splitting1}\\
	& \simeq L^2(\R^2, \C^2)  \otimes L^2(\R). \label{eq:splitting2}
\end{align}

In the decomposition~\eqref{eq:splitting1} (resp.~\eqref{eq:splitting2}), we split the Pauli operator as 
\[
    \bP^\bA_3  = \bL^\bA_2 \otimes \bbI + \bbI \otimes (\cB - \partial_{x_3 x_3}^2 ), \quad
    (\text{resp. } \quad
    \bP^\bA_3  = (\bL^\bA_2 + \cB)\otimes \bbI + \bbI \otimes (- \partial_{x_3 x_3}^2 ).)
\]

\underline{First splitting: keeping the spin structure.}
This splitting is useful if one wants to keep track of the spin--density structure of $R_\gamma$ as a function of $x_3$. This happens for instance when using exchange--correlations functionals which depend explicitly on the spin, as in the LSDA model~\cite{von1972local, gontier2014existence}. In this case, we consider one-body density matrices $\gamma$ of the form
\begin{equation} \label{eq:gamma-spin1}
	\gamma = \sum_{n = 0}^\infty \bP_n \otimes \gamma_n, \quad \gamma_n \in \cS(L^2(\R, \C^2)), \ 0 \le \gamma_n \le 1.
\end{equation}
This is similar to what we have studied in the present article, but the $\gamma_n$ operators now act on $L^2(\R, \C^2)$. Introducing the operator $G := \frac{b}{2 \pi} \sum \gamma_n$ and the set
\[
\cG^{\rm spin} := \left\{  G \in L^2(\R, \C^2), \ G \ge 0 \right\},
\]
we obtain as before that, for all representable spin--density $2\times 2$ matrix $R$, we have
\begin{align*}
	& \inf_{\gamma  \ \text{of the form~\eqref{eq:gamma-spin1}}  \atop R_\gamma = R} \left\{ 	\frac12 \VTr(\bP^\bA_3 \gamma) \right\} 
	 \\
	& \qquad = \inf_{G \in \cG^{\rm spin} \atop R_G = R} \left\{ \frac12 \Tr( - \Delta_1 G) + \Tr(F(b,G)) \right\} + b \int_{\R} (R^{\uparrow \uparrow} - R^{\downarrow \downarrow}) .  \nonumber
\end{align*}
The last term is the Zeeman term, where we have used that $\Tr(\cB G) = b \Tr(G^{\uparrow \uparrow} - G^{\downarrow \downarrow}) = b \int (R^{\uparrow \uparrow} - R^{\downarrow \downarrow})$. 

\medskip

\underline{Second splitting: loosing the spin structure.}
If one is only interested in keeping the total density $\rho_\gamma$ instead of the full spin--density $2 \times 2$ matrix $R_\gamma$, one should instead use the spectral decomposition of the operator $\bP^\bA_2 := \bL^\bA_2 + \cB$. This one is easily deduced from the one of $\bL^\bA_2$ in Proposition~\ref{prop spectral properties L_A}. Using that $\varepsilon_n = b(2n + 1)$, we obtain $\bP^\bA_2 = \sum_{n=0}^\infty \widetilde{\varepsilon}_n \widetilde{\bP}_n $ with
\[
\widetilde{\varepsilon}_n = 2n b, 
\quad \widetilde{\bP}_0 = \begin{pmatrix}
	0 & 0 \\ 0 &  \bP_0
\end{pmatrix}, 
\quad  \widetilde{\bP}_n = \begin{pmatrix}
	\bP_{n-1} & 0 \\ 0 & \bP_{n} . 
\end{pmatrix} \text{ for }n\geq 1 .
\]
The lowest Landau level has now energy $\widetilde{\varepsilon}_0 = 0$, and is only occupied by spin-down electrons. The corresponding eigenspace has density $\rho_{\widetilde{\bP_0}} = \frac{b}{2 \pi}$. The other Landau levels are <<doubly>> occupied, with density $\rho_{\widetilde{\bP_n}} = \frac{b}{\pi}$.

\medskip

In this case, we consider density matrices $\gamma$  of the form 
\begin{equation} \label{eq:gamma-spin2}
	\gamma = \sum_{n = 0}^\infty \widetilde{\bP_n} \otimes \widetilde{\gamma}_n, \quad \widetilde{\gamma}_n \in \cS(L^2(\R, \C)), \ 0 \le \widetilde{\gamma}_n \le 1.
\end{equation}
\begin{remark}
    One can prove that the set of such states is smaller than the set of states $\gamma$ commuting with $\bP^\bA_2$ and the magnetic translations $\fm_\bR$. This comes from the fact that the family $(\fm_\bR)$ is not irreducible on $ \widetilde{E}_n := {\rm Ran} \widetilde{P_n}$. 
\end{remark}

The conclusion of Theorem~\ref{th:reduction} still holds in the spin setting, and the proof is similar with only minor modifications: we now consider the $G$ matrix defined by
\[
G_\gamma = \frac{b}{2 \pi} \left( \widetilde{\gamma}_0 + 2\sum_{n=1}^\infty \widetilde{\gamma}_n \right) \qquad \in \cG,
\]
and the optimal $m_j^*$ functions are given by $m_j^*(0) = \{ \frac{2 \pi g_j}{b}\}$ and $m_j^*(n) = 0$ if $\frac{2 \pi g_j}{b} < 1$, and, if $\frac{2 \pi g_j}{b} \ge 1$,
\[
m_j^*(n) = \begin{cases}
	1 & \quad \text{for} \quad n \le \lfloor \frac{\pi}{b}g_j + \frac{1}{2} \rfloor - 1\\
	\left\{ \frac{\pi}{b}g_j + \frac{1}{2} \right\} & \quad \text{for} \quad n =\lfloor \frac{\pi}{b}g_j + \frac{1}{2} \rfloor \\
	0 & \quad \text{otherwise}.
\end{cases}
\] 
Denoting by $y = \frac{\pi}{b}g_j + \frac12$, this gives the energy
\begin{align*}
	\frac{b}{2 \pi}  \sum_{n = 0}^\infty \widetilde{\varepsilon} _n m_j^*(n) & = \frac{b^2}{2 \pi} \sum_{n=1}^{ \lfloor y \rfloor - 1} 2 n +  \frac{b^2}{\pi} \lfloor y \rfloor \left\{   y \right\} = \frac{b^2}{2 \pi} \left( y^2 - y + \{ y \} (1 - \{ y \}) \right).
\end{align*}
We obtain that for any representable density $\rho$.
\[
\inf_{\gamma  \ \text{of the form~\eqref{eq:gamma-spin2}}  \atop \rho_\gamma = \rho} \left\{ 	\frac12 \VTr(\bP^\bA_3 \gamma) \right\} 
=  \inf_{G \in \cG \atop \rho_G = \rho} \left\{ \frac12 \Tr( - \Delta_1 G) + \Tr(F^{\rm spin}(b,G)) \right\},
\]
with the new functional 
\[
F^{\rm spin}(b, g) =  \frac{}{}\frac{\pi}{2} g^2 -  \frac{b^2}{8 \pi} + \frac{b^2}{2 \pi} \left\{  \frac{\pi g}{b} + \frac12  \right\} \left(1 -  \left\{ \frac{\pi g}{b} + \frac12\right\} \right).
\] 
A plot of $F^{\rm spin}(b = 1, g)$ is displayed in Figure~\ref{fig:plotF}. The constant $\frac{\pi}{2}$ in the $\frac{\pi}{2} g^2$ term is the  two-dimensional Thomas-Fermi constant, when the spin of the electron is included. Again, we see that magnetic effects gives a correction to the Thomas-Fermi approximation, but this time, including the Zeeman term, the magnetic energy is always lower than the no-magnetic one, and we have
\begin{equation} \label{eq:encadrement_Fspin}
	\frac{\pi}{2} g^2 - \frac{b^2}{8 \pi} \le F^{\rm spin}(b, g) \le \frac{\pi}{2} g^2.
\end{equation}

\begin{figure}[!ht]
    \begin{subfigure}{0.45\textwidth}
        \includegraphics[width=1\textwidth]{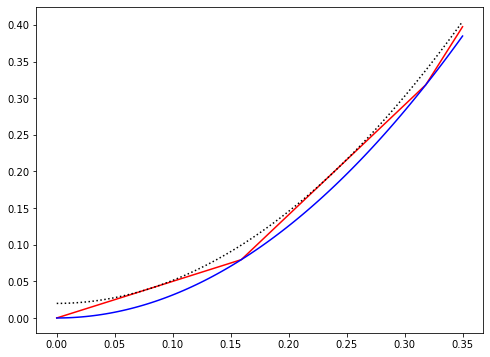}
    \end{subfigure}
    \begin{subfigure}{0.45\textwidth}
        \includegraphics[width=1\textwidth]{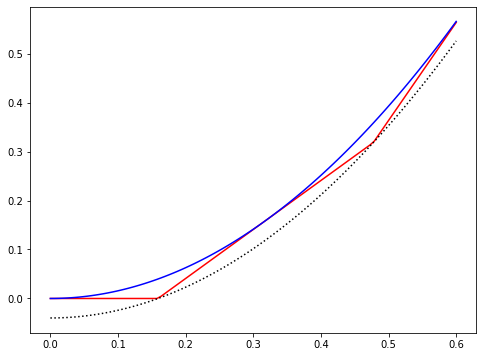}
    \end{subfigure}
    
    \caption{(Left). The map $g \mapsto F(b= 1, g)$ (red) together with its lower/upper bounds in~\eqref{eq:encadrement-F}. (Right) The map $g \mapsto F^{\rm spin}(b= 1, g)$ (red) and its bounds~\eqref{eq:encadrement_Fspin}.}
    \label{fig:plotF}
\end{figure} 

\bibliographystyle{plain}
\bibliography{biblio}

\end{document}